\newcommand{\R}{\mathbb{R}}
\newcommand\numberthis{\addtocounter{equation}{1}\tag{\theequation}}
\theoremstyle{plain}
\newtheorem{theorem}{Theorem}[section]
\newtheorem{proposition}[theorem]{Proposition}
\newtheorem{corollary}[theorem]{Corollary}
\theoremstyle{definition}
\newtheorem{definition}[theorem]{Definition}
\newtheorem{assumption}[theorem]{Assumption}
\theoremstyle{remark}
\newtheorem{remark}[theorem]{Remark}
\icmltitlerunning{Universal Approximation of Mean-Field Models via Transformers}
\titlespacing*{\paragraph}{0pt}{3pt}{1em}
\begin{document}

\twocolumn[
\icmltitle{Universal Approximation of Mean-Field Models via Transformers}

% It is OKAY to include author information, even for blind
% submissions: the style file will automatically remove it for you
% unless you've provided the [accepted] option to the icml2025
% package.

% List of affiliations: The first argument should be a (short)
% identifier you will use later to specify author affiliations
% Academic affiliations should list Department, University, City, Region, Country
% Industry affiliations should list Company, City, Region, Country

% You can specify symbols, otherwise they are numbered in order.
% Ideally, you should not use this facility. Affiliations will be numbered
% in order of appearance and this is the preferred way.
\icmlsetsymbol{equal}{*}

\begin{icmlauthorlist}
\icmlauthor{Shiba Biswal}{equal,yyy}
\icmlauthor{Karthik Elamvazhuthi}{equal,yyy}
\icmlauthor{Rishi Sonthalia}{equal,comp}
\end{icmlauthorlist}

\icmlaffiliation{yyy}{Theoretical Division, Los Alamos National Laboratory, Los Alamos, NM, USA}
\icmlaffiliation{comp}{Boston College, Boston, MA, USA}
% \icmlaffiliation{sch}{School of ZZZ, Institute of WWW, Location, Country}

\icmlcorrespondingauthor{Shiba Biswal}{sbiswal@lanl.gov}
\icmlcorrespondingauthor{Rishi Sonthalia}{sonthal@bc.edu}

% You may provide any keywords that you
% find helpful for describing your paper; these are used to populate
% the "keywords" metadata in the PDF but will not be shown in the document
\icmlkeywords{ Mean Field Models, Transformer, Universal Approximation, Collective Behavior}

\vskip 0.3in
]

% this must go after the closing bracket ] following \twocolumn[ ...

% This command actually creates the footnote in the first column
% listing the affiliations and the copyright notice.
% The command takes one argument, which is text to display at the start of the footnote.
% The \icmlEqualContribution command is standard text for equal contribution.
% Remove it (just {}) if you do not need this facility.

% \printAffiliationsAndNotice{}  % leave blank if no need to mention equal contribution
\printAffiliationsAndNotice{\icmlEqualContribution} % otherwise use the standard text.

\begin{abstract}
This paper investigates the use of transformers to approximate the mean-field dynamics of interacting particle systems exhibiting collective behavior. Such systems are fundamental in modeling phenomena across physics, biology, and engineering, including opinion formation, biological networks, and swarm robotics. The key characteristic of these systems is that the particles are indistinguishable, leading to permutation-equivariant dynamics. First, we empirically demonstrate that transformers are well-suited for approximating a variety of mean field models, including the Cucker-Smale model for flocking and milling, and the mean-field system for training two-layer neural networks. We validate our numerical experiments via mathematical theory. Specifically, we prove that if a finite-dimensional transformer effectively approximates the finite-dimensional vector field governing the particle system, then the $L_2$ distance between the \textit{expected transformer} and the infinite-dimensional mean-field vector field can be uniformly bounded by a function of the number of particles observed during training. Leveraging this result, we establish theoretical bounds on the distance between the true mean-field dynamics and those obtained using the transformer. 
\end{abstract}

\section{Introduction}
\label{sec:intro}
The identification of dynamical system models for physical processes is a fundamental application of machine learning (ML). Of particular interest are systems of particles exhibiting collective behaviors—such as swarming, flocking, opinion dynamics, and consensus. These systems involve a large number of particles or agents that follow identical dynamics, which are independent of the particles' identities and are \textit{permutation-equivariant}. Examples include biological entities \citep{lopez2012behavioural}, robots \citep{ elamvazhuthi2019mean}, traffic flow \citep{piccoli2009vehicular, siri2021freeway}, and parameters in two-layer neural networks \citep{mei2019mean}. A common approach to simplifying the analysis of such systems is to consider the continuum limit as the number of particles $n \to \infty$, resulting in \emph{mean-field models} rooted in statistical physics. Instead of specifying the dynamics of each agent, the particles are modeled using probability measures. This paper learns the mean-field dynamics of particles via particle trajectories using transformers.

Let $\Omega \subset \mathbb{R}^d$. Consider a vector field $\mathcal{F}: \Omega \times \mathcal{P}(\Omega) \to \mathbb{R}^d$. The following equation describes the general mean-field behavior of interacting particles evolving on $\Omega$,
\[
    \frac{dz}{dt} = \mathcal{F}(z,\mu), \,
    z(0) = z_0 \sim \mu_0,
\]
where $z \in \Omega$ denotes the state of a particle, $\mu \in \mathcal{P}(\Omega)$ is the distribution of the particles at time $t$, and $\mu_0$ is the initial distribution. The above equation can be approximated by a finite particle-level model. Let $\mathbf{z} = (z_1,\ldots,z_n) \in \Omega^n$ represent the state of $n$ particles, and consider the following,
\[
    \dot{z}_i = \mathcal{F}(z_i,\nu^n_\mathbf{z}),
\]
where the empirical distribution of the $n$-particle system given by $\nu^n_{\mathbf{z}} := \frac{1}{n} \sum_{i=1}^n \delta_{z_i} \in \mathcal{D}^n(\Omega)$.
Several works learn the mean-field dynamics by learning the particle-level dynamics. \citet{pham2023mean} constructed a neural network via binning as approximations of the vector field and proved approximation capabilities when restricted to dynamics on measures that admit densities. The works of \citet{feng2021data,lu2019nonparametric,miller2023learning} present a kernel-based method for identifying the dynamics of interacting particle systems. Whereas, \citet{messenger2022learning} presents a form of the SINDy algorithm 
% \cite{brunton2016discovering}
for identifying mean-field dynamics of interacting particle systems. \citet{kratsios2021universal} constructed a probabilistic transformer that can satisfy additional constraints via maps on measures. Additionally, \citet{furuya2024transformers} use a continuous version of transformers and attention \cite{Vuckovic2020AMT, geshkovski2023mathematical} and provide universal approximation of measure theoretic maps. Similarly \cite{adu2024approximate} have proven an approximation result for measure theoeretic maps arising from solutions of continuity equations on the sphere using a different architecture.
% While these approaches have made significant advances in identifying mean-field dynamics, they have a few limitations. First, there is a lack of theoretical results for approximating general mean-field equations via traditional neural networks. \citet{pham2023mean} \comr{is this true?}
%     \item The methods require a fixed number of particles for all time steps. This is challenging for real data, as the number of particles can change between time steps. 
%     \item The methods do not take advantage of the symmetries and structure of the problem. One major structure that is not considered by prior work is \textit{permutation equivariance}.
% \end{enumerate}

This paper explores the use of transformers to approximate the dynamical systems that govern the collective behavior of interacting particles with permutation-equivariant dynamics. We compare the transformer against models from \citet{pham2023mean} and \citet{messenger2022learning} as well as two additional permutation equivariant baselines models on synthetic and real data from the Cucker-Smale model for swarming and milling. 
%
% that the transformer has the lowest mean-squared error when appr  and the mean-field system for training two-layer neural networks
%
% validate our theoretical findings through numerical simulations and comparisons with established benchmarks, specifically on the .
%
We prove theoretical guarantees for the transformer by lifting it from a sequence-to-sequence map to a map on measures, by taking the expectation of a finite-dimensional transformer with respect to a product measure. We refer to this as the \textit{expected transformer}. 

% Transformers inherently exhibit permutation equivariance, making them well-suited for modeling systems with indistinguishable particles.

% In this work, we analyze the approximation capabilities and establish rates of approximation of the expected transformer as a function of the approximation error achieved by the finite-dimensional transformer and the sequence length used. Using this approximation result, we demonstrate that the solution of the \textit{continuity equation}—which describes the mean-field behavior of interacting particle systems—can be approximated by approximating the vector fields using the expected transformer model. 

The main \textbf{contributions} are as follows:
\begin{enumerate} [nosep, leftmargin=*]
    \item We empirically show that transformers approximate the vector field better compared to other network architectures (see \cref{tab:Fn}). 
    \item We define a continuum version of the transformer as an expectation of finite-dimensional transformers (see \eqref{eq:NewTrnsf}).
    \item We establish approximation rate bounds of measure-valued maps by this expected transformer (see \cref{thm:universal}). 
    \item We mathematically show that the solution to the mean-field model can be approximated by approximating the vector field by the expected transformer (see \cref{thm:ContEqApprox} and \cref{fig:time-CS}).
    
\end{enumerate}

The rest of the paper is organized as follows. Section~\ref{sec:notation} defines the notation used in the paper. Section~\ref {sec:problem} defines the problem, Section~\ref{sec:Sim} presents our numerical experiments, and Section~\ref{sec:theory} presents our theoretical results. 

\subsection{Notation}
\label{sec:notation}

Let $\mathbb{R}^d$ denote the $d$-dimensional Euclidean space, and let $\mathbb{Z}_{+}$ denote the set of positive integers. The diameter of a subset $A \subset \mathbb{R}^d$ is defined as $\operatorname{diam}(A) := \sup\{ \|x - y\| : x, y \in A \}$ and let $B_r(z)$ be the closed ball of radius $r > 0$ centered at $z \in \mathbb{R}^d$. 

We denote by $\mathcal{P}(\mathbb{R}^d)$ the set of all Borel probability measures on $\mathbb{R}^d$. Similarly, for a subset $\Omega \subset \mathbb{R}^d$, $\mathcal{P}(\Omega)$ denotes the set of Borel probability measures on $\Omega$. The subset of probability measures $\nu$ with finite $p$-th moments, $M_p(\nu) := \left( \int_{\mathbb{R}^d} \| x \|^p \, d\nu(x) \right)^{1/p}$ is denoted by
$\mathcal{P}_p(\mathbb{R}^d) := \left\{ \nu \in \mathcal{P}(\mathbb{R}^d) : M_p(\nu) < \infty \right\}$.
Let $\mathcal{P}_c(\mathbb{R}^d)$ be the set of probability measures with compact support. The set of empirical measures formed by finite sums of $n$ Dirac-delta measures is denoted by
$
      \mathcal{D}^n(\mathbb{R}^d) := \{ \nu \in \mathcal{P}(\mathbb{R}^d) : \nu = \frac{1}{n} \sum_{i=1}^n \delta_{x_i},\ x_i \in \mathbb{R}^d \}.
$
Similarly, the set $\mathcal{D}^n(\Omega)$ denotes the set of empirical measures on $\Omega \subset \mathbb{R}^d$.
For $\nu \in \mathcal{P}(\mathbb{R}^d)$, the $n$-fold product measure on $(\mathbb{R}^d)^n$ is denoted by $\nu^{\otimes n} := \underbrace{\nu \times \cdots \times \nu}_{n\ \text{times}}$. The support of a measure $\nu \in \mathcal{P}(\mathbb{R}^d)$, denoted by $\operatorname{supp}(\nu)$, is the smallest closed set $S \subset \mathbb{R}^d$ such that $\mu(\mathbb{R}^d \setminus S) = 0$. Given a measurable map $X : \mathbb{R}^d \to \mathbb{R}^d$ and a measure $\mu \in \mathcal{P}(\mathbb{R}^d)$, the \emph{pushforward measure} $X_{\#} \mu \in \mathcal{P}(\mathbb{R}^d)$ is defined by $X_{\#} \mu(A) := \mu\left( X^{-1}(A) \right)$ for every Borel measurable set $A \subset \mathbb{R}^d$.

For a vector $y \in \mathbb{R}^d$, the $i$-th component is denoted by $y_i$. The $p$-norm, $p \in [1,\infty)$, and $\infty$-norm of $y$ are, respectively,
$ \|y\|_p^p = \sum_{i=1}^d |y_i|^p$, $\|y\|_\infty = \max_{i}|y_i|$.
Boldface letters, such as $\mathbf{z}$, denote elements in $\mathbb{R}^{d \times n}$, representing collection of $n$ vectors in $\mathbb{R}^d$. The $i,j$-th element is denoted by $\mathbf{z}_{i,j}$. The corresponding $p$-norm ($p \in [1,\infty)$) and $\infty$-norm of $\mathbf{z}$ are, respectively,
$ \|\mathbf{z}\|_p^p = \sum_{i=1}^d \sum_{j=1}^n |\mathbf{z}_{ij}|^p$, $\|\mathbf{z}\|_\infty = \max_{i,j}|\mathbf{z}_{ij}|$.

% Let $f : \mathbb{R}^d \to \mathbb{R}$, the corresponding The $p$-norm ($p \in [1,\infty)$) and $\infty$-norm of $y$ are, respectively,
% \[
%     \| f \|_{p}^p = \int_{\mathbb{R}^d} | f(x) |^p \, d\mu(x) , ~ \| f \|_{\infty} = \operatorname*{ess\,sup}_{x \in \mathbb{R}^d} | f(x) |
% \]
% The type of norm (vector or function) will be determined by the context.

% The space of measurable functions $f : \mathbb{R}^d \to \mathbb{R}$ for which the $\| f \|_{p}  < \infty$ is denoted by  $L^p(\mathbb{R}^d, \mu)$, and the space of essentially bounded measurable functions for which the $\|f\|_{\infty} <\infty$ is denoted by $L^\infty(\mathbb{R}^d, \mu)$.

A function $f : \mathbb{R}^{d \times n} \to \mathbb{R}^{d \times n}$ is \emph{permutation equivariant} if for any permutation $\sigma \in S_n$, where $S_n$ is the symmetric group on $n$ elements, and for any $\mathbf{x} = (x_1, \dots, x_n) \in \mathbb{R}^{d \times n}$, we have
$ f(x_{\sigma(1)}, \dots, x_{\sigma(n)}) = \left( f_{\sigma(1)}(\mathbf{x}), \dots, f_{\sigma(n)}(\mathbf{x}) \right)$, where $f_i(\mathbf{x})$ denotes the $i$-th component of the output. We denote by $C^k(\mathbb{R}^d)$ the space of $k$-times continuously differentiable functions on $\mathbb{R}^d$. The space of continuous functions with compact support is denoted by $C_c(\mathbb{R}^d)$. 

\section{Problem Formulation}
\label{sec:problem}

We briefly introduced the problem in \cref{sec:intro}, we restate the equations again for clarity. Let $\Omega \subset \mathbb{R}^d$. Consider a vector field $\mathcal{F}: \Omega \times \mathcal{P}(\Omega) \to \mathbb{R}^d$. The general mean-field behavior of interacting particles evolving on $\Omega$ is given by
\begin{align}
\label{eq:MFmodel}
    \frac{dz}{dt} = \mathcal{F}(z,\mu), \,
    z(0) = z_0 \sim \mu_0,
\end{align}
where $z \in \Omega$ denotes the state of each particle, $\mu \in \mathcal{P}(\Omega)$ denotes the distribution of the particles, and $z_0$ is the initial state that is distributed according $\mu_0$, the initial distribution. The inter-particle interactions are modeled through $\mu$; specifically, the dynamics of each particle are influenced by the distribution of all other particles.

Corresponding to \eqref{eq:MFmodel}, the \emph{continuity equation} describes the evolution of the distribution $\mu$:
\begin{align}
\label{eq:ContEq}
    \frac{\partial \mu}{\partial t} + \nabla_z \cdot (\mathcal{F}(z,\mu) \mu) = 0, \, \,
    \mu(0) = \mu_0. 
\end{align}
For a finite final time $\tau>0$, $\mu^{\mathcal{F}}: [0,\tau] \to \mathcal{P}(\Omega)$ denotes the solution of the continuity equation \eqref{eq:ContEq} over the time interval $[0,\tau]$. 

In this paper, we propose to use transformers to approximate the map in \eqref{eq:MFmodel} and the system \eqref{eq:ContEq}. \emph{However, transformers are defined on sequences of vectors in $\mathbb{R}^d$, whereas the map $\mathcal{F}$ is defined on $\Omega \times \mathcal{P}(\Omega)$, an infinite-dimensional space.} Therefore, we consider a finite-dimensional approximation of \eqref{eq:MFmodel} via a particle-level system. Specifically, consider a $n$-particle system where the state of each particle $i \in \{1,\ldots,n\}$ is given by $z_i \in \Omega$. We assume that each $z_i$ is independently sampled from the distribution $\mu \in \mathcal{P}(\Omega)$. Let $\mathbf{z} = (z_1,\ldots,z_n) \in \Omega^n$ denote the collection of particle states. The empirical distribution of the $n$-particle system is then given by 
\begin{equation} \label{eq:sample} 
    \nu^n_{\mathbf{z}} := \frac{1}{n} \sum_{i=1}^n \delta_{z_i} \in \mathcal{D}^n(\Omega). 
\end{equation}
The particle-level dynamics on $\mathbb{R}^d$ according to the map defined in \eqref{eq:MFmodel}, for a single particle $i$, can be written as
\begin{equation}
\label{eq:n-PartSys}
    \dot{z}_i = \mathcal{F}(z_i,\nu^n_\mathbf{z}), ~ z_i(0) \sim \mu_0
\end{equation}
Note that the collection of random variables $(z_i)$ is permutation equivariant because the joint distribution of $(z_i)$ is invariant under any permutation of the indices. 

% \textbf{Goals}: Our objectives are twofold:
% \begin{enumerate}[nosep]
%     \item In \cref{sec:Sim}, we will demonstrate through numerical simulations that the vector field $\mathcal{F}$ in \eqref{eq:MFmodel} can be approximated by a transformer. This will be further formalized in \cref{thm:universal}.
%     \item Next, we will show that the continuity equation \eqref{eq:ContEq} can be approximated by the aforementioned approximation of $\mathcal{F}$. We will present this empirically by plotting the trajectories of the continuity equation, and provide a mathematical proof in \cref{thm:ContEqApprox}.
% \end{enumerate}

%%%%%%%%%%%%%%%%%%%%%%%%%%%%%%%%%%%%%%%%%%%%%%%%%%%%%%%%%%%%%%%%%%%%%%%%%%%%%%%%%%%%
\section{Numerical Simulations}
\label{sec:Sim}

In this section, we present experiments demonstrating that transformers can learn mean-field dynamics and generalize to cases involving more particles than those seen during training. We also compare the transformer with Graph Neural Networks (GNNs) and prior models from \citet{pham2023mean, messenger2022learning}. 

\subsection{Learning the Vector Field}
\label{subsec:sim1}

\begin{table*}[ht]
    \centering
    \begin{tabular}{l|l l}
        \toprule
        Model & Cucker-Smale & Fish Milling \\
        \midrule
        \textbf{Transformer} & $\mathbf{(1.9 \pm 0.3) \times 10^{-6}}$ & $\mathbf{(2.2 \pm 0.0) \times 10^{-2}}$ \\
        Cylindrical NN & $(1.1 \pm 0.4) \times 10^{-3}$ & $(2.7 \pm 0.0) \times 10^{-1}$ \\
        GraphConv $m=3$ & $(2.0 \pm 0.1) \times 10^{-4}$ & $(1.2 \pm 0.2) \times 10^{-1}$ \\
        GraphConv $m=20$ & $(4.1 \pm 3.6) \times 10^{-3}$ & $> 1$ \\
        TransformerConv $m=3$ & $(2.8 \pm 0.3) \times 10^{-5}$ & $(6.5 \pm 1.7) \times 10^{-2}$\\
        TransformerConv $m=20$ & $(3.3 \pm 0.1) \times 10^{-6}$ & $(1.0 \pm 0.3) \times 10^{-1}$\\
        FNN & $(6.7 \pm 0.6) \times 10^{-6}$ & N/A \\
        Kernel & $(3.0 \pm 0.1) \times 10^{-5}$ & N/A \\
        \bottomrule
    \end{tabular}
    \caption{Table showing the mean-squared error in approximating the map $F_n$ \eqref{eq:Fmap} for the different data and models.}
    \label{tab:Fn}
\end{table*}

Our first goal focuses on learning the vector field $\mathcal{F}$. Towards this, in this experiment, we use two datasets: first, a synthetic dataset generated from the Cucker-Smale model \citep{cucker2007emergent}, and second, real data of fish milling \cite{osufish}.

\paragraph{Cucker-Smale Model} The first example that we consider is the well-studied $2d\times n$-dimensional Cucker-Smale (CS) equation that models consensus of a $n$-agent system \cite{cucker2007emergent}. For $d=2$, he vector field $\mathcal{F} : \R^4 \times \mathcal{P}(\R^4) \to \R^4$ is given by
\begin{equation}
\begin{aligned}
        \mathcal{F}(x,v,\mu) &= \begin{bmatrix}
        v \\
        - \int_{\R^4} \phi(\|x-y\|)(v-u) d\mu(y,u)
    \end{bmatrix}, \\
    \phi(r) &= \frac{H}{(s^2+r^2)^b}. \label{eq:CS_F}
\end{aligned}
\end{equation}
Where $x\in \mathbb{R}^2$ and $v\in \mathbb{R}^2$ denote the position and velocity of each agent, respectively.
Here, $\phi$, a non-negative function, is the interaction potential that determines the inter-agent interaction, and $H,s,b$ are parameters (set to 1 here). 

Next, we consider the $n$-dimensional (finite) approximation of the model above.
\begin{equation}
\begin{aligned} 
    &\frac{dx_i}{dt} = v_i, ~ 0\leq i\leq n \\
    &\frac{dv_i}{dt} = \frac{1}{n} \sum_{j=1}^n \phi(\|x_i -x_j\|)(v_j - v_i) \label{eq:CS1}
\end{aligned}
\end{equation}
To generate the data, we compute trajectories for 500 random initial conditions $(x_i(0),v_i(0))$, chosen uniformly at random from $\Omega = [0,1] \times [0,1]$. For each initial condition, we solve the differential equation \eqref{eq:CS1} for $n = 20$ agents over a time horizon $[0,100]$ using SciPy's \texttt{solve\_ivp}. Hence, for each initial condition, we obtain position $x_i(t)$ and velocity $v_i(t)$, for each time step $t$. 

\begin{figure*}
\centering
    \includegraphics[width = 0.25\textwidth]{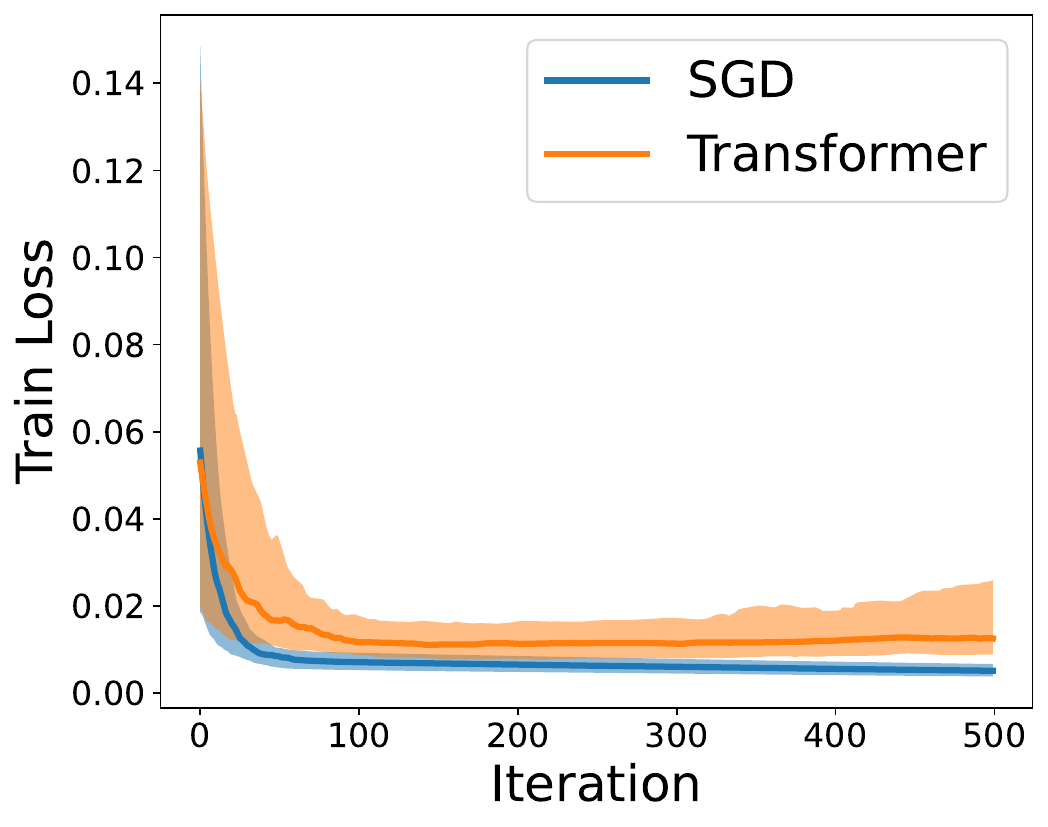}
    \includegraphics[width = 0.25\textwidth]{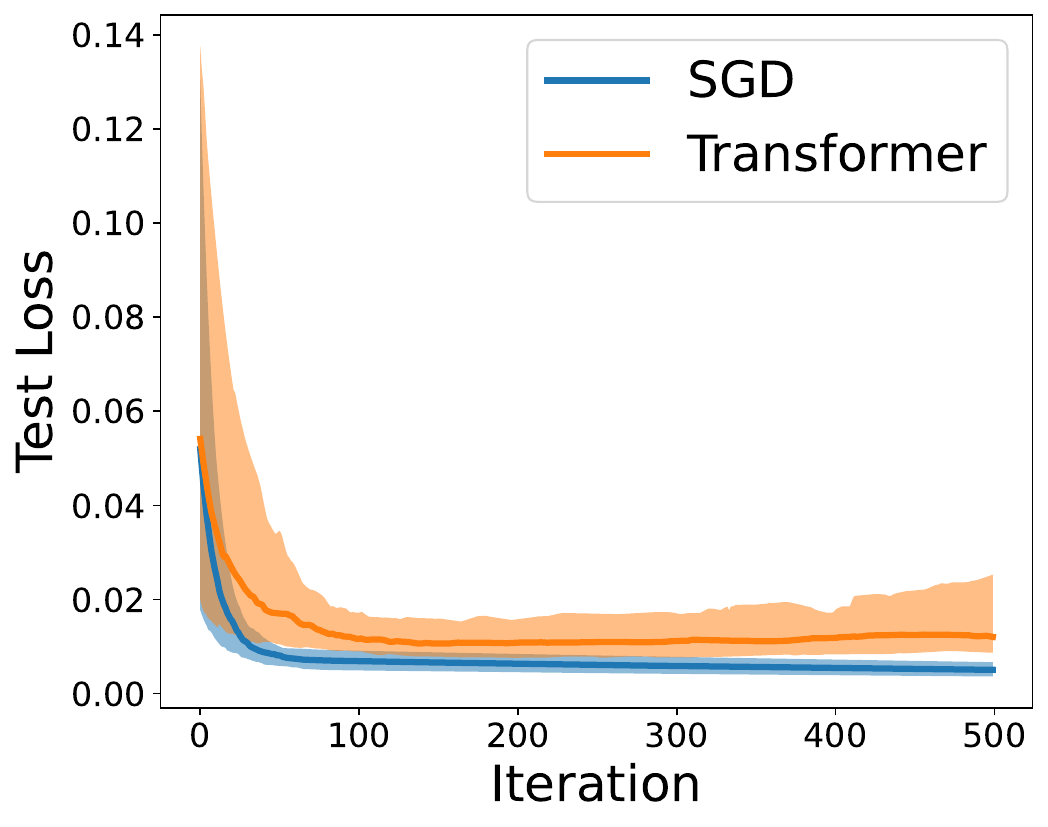}
    \includegraphics[width = 0.25\textwidth]{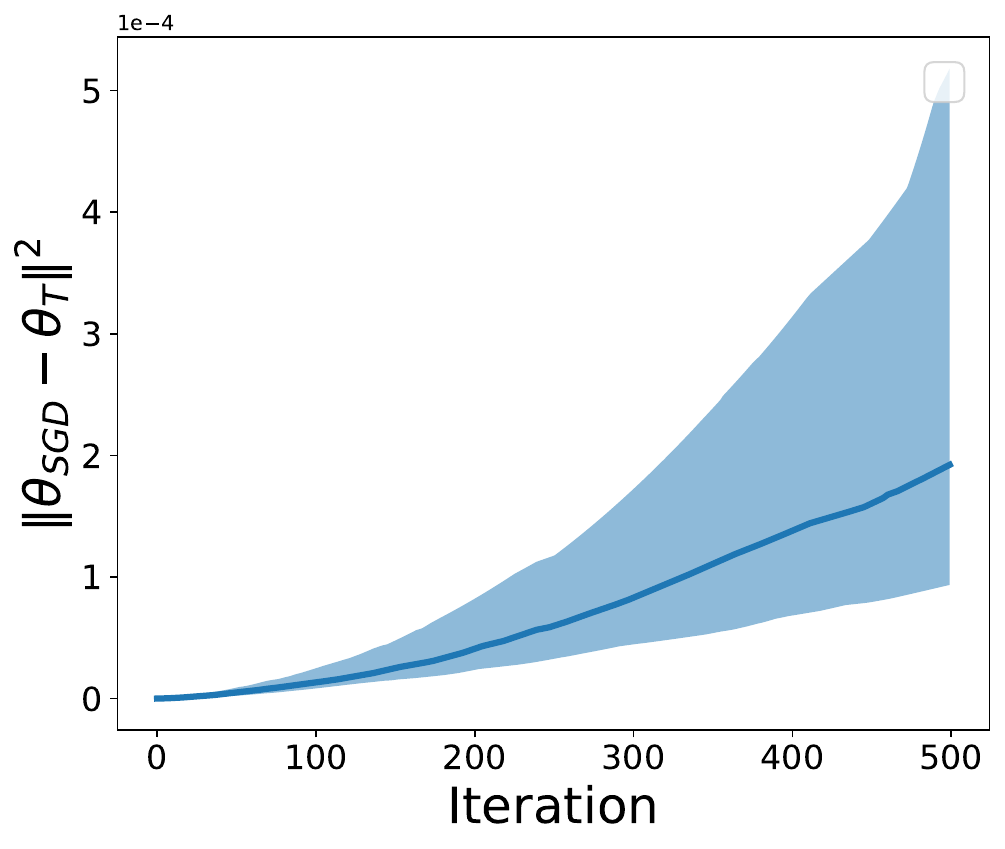}
    \caption{Figure comparing training a two-layer neural network using gradient descent to update the weights and using a transformer to update the weights. The solid line is the median value over 100 trials, while the shaded region is the interquartile range (25th-75th percentile). Left: evolution of the training error during training. Center: evolution of the test error during training. Right: difference between the parameters learned by gradient descent and the transformer.}
    \label{fig:NN}
\end{figure*}

\paragraph{Fish Milling Model}
The second example that we consider is the fish milling model \citep{PhysRevLett.96.104302} which is a version of the Cucker-Smale model. 
% The particle-level dynamics are given by
% \begin{align*}
%     &\frac{dx_i}{dt} = v_i, ~ 0\leq i\leq N \\
%     &\frac{dv_i}{dt} = (\alpha -\beta \|v_i\|_2^2) v_i + \frac{1}{N} \sum_{j=1}^N \phi(\|x_i -x_j\|)(v_j - v_i),
% \end{align*}
% where $\alpha, \beta$ are parameters. The interaction potential, in this case, is given by 
% % $ \phi(r) = \frac{1}{r} \left( -\frac{C_r}{l_r} \exp^{-r/l_r} - \frac{C_a}{l_a} \exp^{-r/l_a}  \right)$
% $ \phi(r) = 1/r \left( -C_r/l_r \exp^{-r/l_r} - C_a/l_a \exp^{-r/l_a}  \right)$
% where $l_a, l_r$ represent the attractive and repulsive potential range and $C_a, C_r$ represent the magnitude, respectively. 
In contrast to the previous example, this example uses an actual experimental dataset \citep{osufish}. The dataset includes the position $x \in \mathbb{R}^2$, velocity $v \in \mathbb{R}^2$, and fish identifier for $n=300$ fish over 5000 time steps. While the experiment tracks 300 fish, real-world data is incomplete at certain time points. Specifically, the positions of all fish are not available at all times, and as the fish move in a milling pattern, some fish disappear and reappear. When a fish reappears, it is assigned a new index, and its previous identity is lost. On average, only about 220 fish are observed at any given time. As a result, we have 5000 data points, but the number of particles varies at each time step.

\paragraph{Data} Next, we will describe how we assemble the data from the Cucker-Smale and fish milling model for the ML architectures. We note that $\mathcal{F}$ maps $\begin{bmatrix} x & v \end{bmatrix}^T \mapsto \begin{bmatrix} \dot{x} & \dot{v} \end{bmatrix}^T$. Let $z(t) = \begin{bmatrix} x(t) & v(t) \end{bmatrix}^T$ and $\dot{z}(t)= \begin{bmatrix} \dot{x}(t) &\dot{v}(t) \end{bmatrix}^T$. From the simulations described above, we obtain the collection $\mathbf{z}(t) = (z_1(t),\ldots, z_n(t))$, where each $z_i(t) = \begin{bmatrix}x_i(t) & v_i(t)\end{bmatrix}^T$. To construct $\dot{z}(t)$, we use the centered-difference formula:
\[
    \dot{z}_i(t) \approx \frac{z_i(t+\Delta t) - z_i(t-\Delta t )}{2\Delta t}, 
\]
where $\Delta t$ is the time step. We assume that at each time step $t$, we may know $z(t)$ for only a subset of the particles;as is the case for the fish milling model. Moreover, we do not know the functional form of $\mathcal{F}(\mathbf{z}, \nu_{\mathbf{z}}^n)$. Let $I(t) := \{ i : z_i(t), \dot{z}_i(t) ~ \text{are known} \}$.
and we redefine $\mathbf{z}(t)$ to mean a collection of particle states for which we have data, and similarly for $\dot{\mathbf{z}}(t)$,
\[
    \mathbf{z}(t) = \{z_i(t) : i \in I(t)\} \text{ and } \dot{\mathbf{z}}(t) = \{ \dot{z}_i(t) : i \in I(t) \}.
\]
The data so obtained are split into an 80-20-20 split for training, validation, and testing. 

Next, we will describe the ML architectures that we use to approximate $\mathcal{F}$. In particular, the transformer will be compared against four additional permutation equivariant baselines.

\paragraph{Transformer} In \cref{app:transformer}, we define a transformer network, $T : \R^{n \times d} \to \R^{n \times k}$ ($k$ not necessarily equal to $d$)
% , is defined as a composition of transformer blocks followed by an output network \eqref{eq:TransfDef}. 
At each time step $t$, we provide the transformer with input $\mathbf{z}(t)$ and have it predict $\dot{\mathbf{z}}(t)$. The model is trained using mean-squared error loss. 

\paragraph{Graph Neural Network (GNN)} 
Given a set of particles $Z(t)$, let $G_3(t)$ and $G_{20}(t)$ represent the graphs based on the three nearest neighbors and the twenty nearest neighbors, respectively. We use two graph neural networks (GNNs) that, at each time step $t$, take $(G_m(t), \mathbf{z}(t))$ as input, where $m$ is either 3 or 20, and aim to predict the node labels $\dot{\mathbf{z}}(t)$. This forms a graph regression task where the target size depends on the number of nodes in the graph.

We employ two common GNN architectures: the Graph Convolutional Network (GraphConv) \cite{morris2019weisfeiler} and TransformerConv \cite{shi2020masked}. Note that, similar to GAT \cite{velivckovic2017graph}, TransformerConv applies attention only to local neighborhoods.

\paragraph{Other Baselines} The final three baselines we consider are: \textit{cylindrical nets} from \citet{pham2023mean}. Additionally, if $|\dot{\mathbf{z}}(t)|$ is constant for all $t$ (i.e., for the CS data), we evaluate a fully connected feedforward net and kernel regression with sine, cosine, and polynomial basis. For both of these methods, we concatenate the vectors in ${\mathbf{z}}(t)$ and $\dot{\mathbf{z}}(t)$ to create the training data. 

\paragraph{Training Details and Hyperparameter Search} For each model and dataset, we conduct a hyperparameter search to optimize depth, width, and learning rate. We train the models using mini-batch Adam and a cosine annealing learning rate schedule. For kernel regression, we explore different numbers of basis functions for each type of basis function. See Appendix~\ref{app:train} for more training details.\footnote{Code can be found at: \url{https://anonymous.4open.science/r/Mean-Field-Transformers-D585/README.md}}

To account for the randomness introduced by initialization and training, we conduct five trials for each hyperparameter setting. The best hyperparameter configuration is selected based on performance on the validation data.

\paragraph{Results}

\cref{tab:Fn} presents the mean-squared error for the different methods, across the two data sets. We observe that the transformer has the lowest mean-squared error among all of the models. Therefore, we conclude (empirically) that transformers can effectively learn the vector field.

\subsection{Simulating Mean-Field Dynamics}
Our second goal is to approximate the solution to the continuity equation \eqref{eq:ContEq}.
% , by approximating the vector field $\mathcal{F}$ by a transformer.
We use two synthetic datasets: the Cucker-Smale dataset from \cref{subsec:sim1}, and the training dynamics for a two-layer neural network \cite{mei2019mean}.

\paragraph{Cucker-Smale} After obtaining a transformer $T$ that approximates the vector field $\mathcal{F}$ in \eqref{eq:CS_F} (as described in \cref{subsec:sim1}), we solve the differential equation \eqref{eq:CS1} twice, using SciPy's \texttt{solve\_ivp} function, for a new set of initial condition. First, with the true vector field $\mathcal{F}$, and second, using the transformer $T$ in lieu of $\mathcal{F}$. We note that solving \eqref{eq:ContEq} for Dirac valued initial distributions, $\delta_{{z}_i(0)}$, is equivalent to solving \eqref{eq:CS1} for initial condition ${z}_i(0)$.

\paragraph{Training 2-Layer Neural Network}
Consider a two-layer network $f(x) = \sum_{i=1}^n a_i \sigma(x^T w_i)$. Let $\theta_i = (a_i, w_i)$ be the parameters (the states of this model) and $\mathbf{\Theta} = (\theta_1, \ldots, \theta_n)$. In this model, we treat each $\theta_i$ as a particle, with its distribution evolving according to the following continuity equation, as derived by \citet{mei2019mean},
\begin{equation}
\label{eq:NNmodel}
    \dot{\mu} = 2\xi(t) \nabla_\theta \cdot (\mu \nabla_\theta \Psi(\theta, \mu)).
\end{equation}
Here, $\xi(t)$ depends on the learning rate schedule and 
\begin{align*}
    \Psi(\theta, \mu) &:= -\mathbb{E}_{x,y}\left[ya\sigma(x^Tw)\right] \\
    &+ \int \mathbb{E}_{x,y}\left[a\hat{a}\sigma(x^Tw)\sigma(x^T\hat{w})\right] d\mu(\hat{a}, \hat{w}). 
\end{align*}
Here, the vector field we wish to approximate is 
$\mathcal{F}(\theta, \mu) =  2\xi(t) \nabla_\theta \Psi(\theta, \mu)$. To generate the data, let $X$ be a $d$ times 1000 dimensional matrix with i.i.d. Gaussian entries. Let $f$ be a fixed two-layer teacher network with sigmoid activation ($\sigma$) and let $Y = f(X)$. We then initialize 50 two-layer networks with $n$ hidden nodes, and train them for 500 epochs using the data $X,Y$. We use the MSE loss and gradient descent to train the network. This gives us 2500 training data points ($\mathbf{\Theta}, \dot{\mathbf{\Theta}})$. We set the hidden layer with $n = 100$ and use an input dimension of $d = 10$. 

Once the transformer is trained, we test it as follows: We reinitialize a two-layer network, providing its weights as input to the transformer. The transformer’s outputs are then treated as gradients, which we use to perform gradient descent. Finally, we compare the results with the dynamics obtained from performing true gradient descent.
% Since the sigmoid is 1-Lipschitz, the above equations satisfy our assumptions.

\begin{figure}
    \includegraphics[width = 0.49\linewidth]{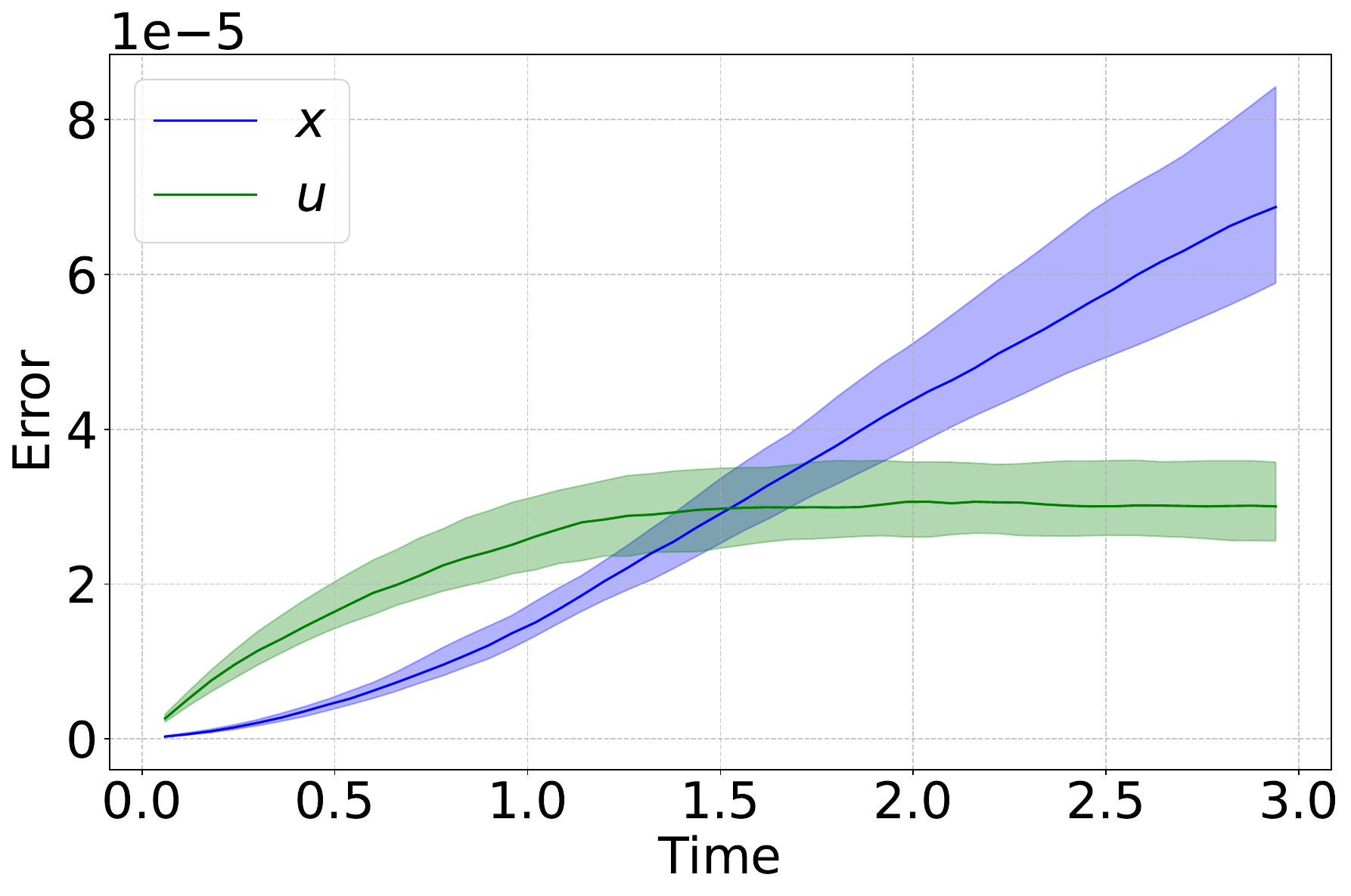}
    \includegraphics[width = 0.49\linewidth]{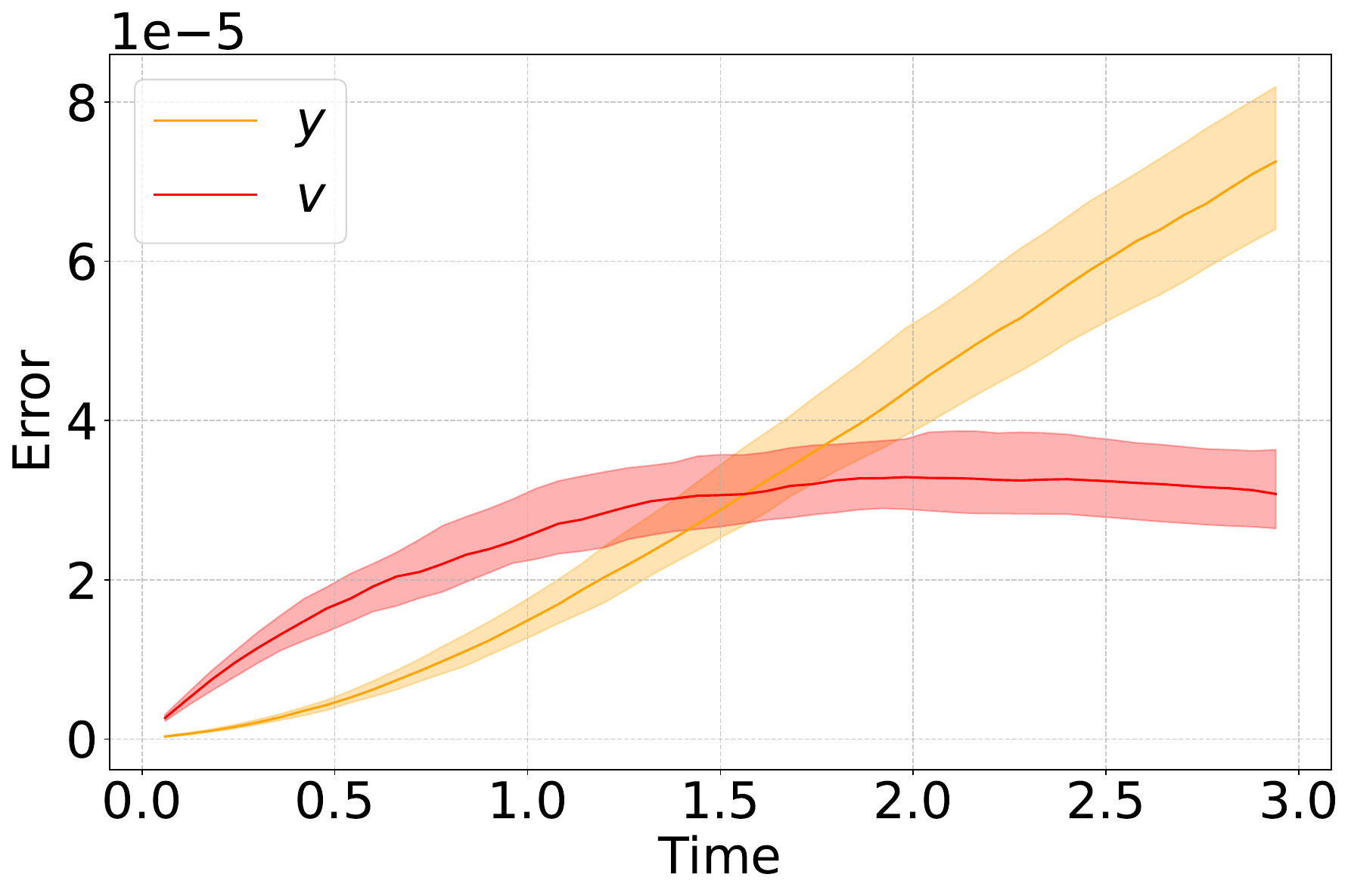}
    \caption{Figure comparing the true dynamics of the Cucker-Smaler model versus those obtained from a transformer. The solid line is the median value over 100 trials, while the shaded region is the interquartile range (25th-75th percentile).}
    \label{fig:time-CS}
\end{figure}

\paragraph{Results}
\begin{figure}
    \centering
    \includegraphics[width = 0.5\columnwidth]{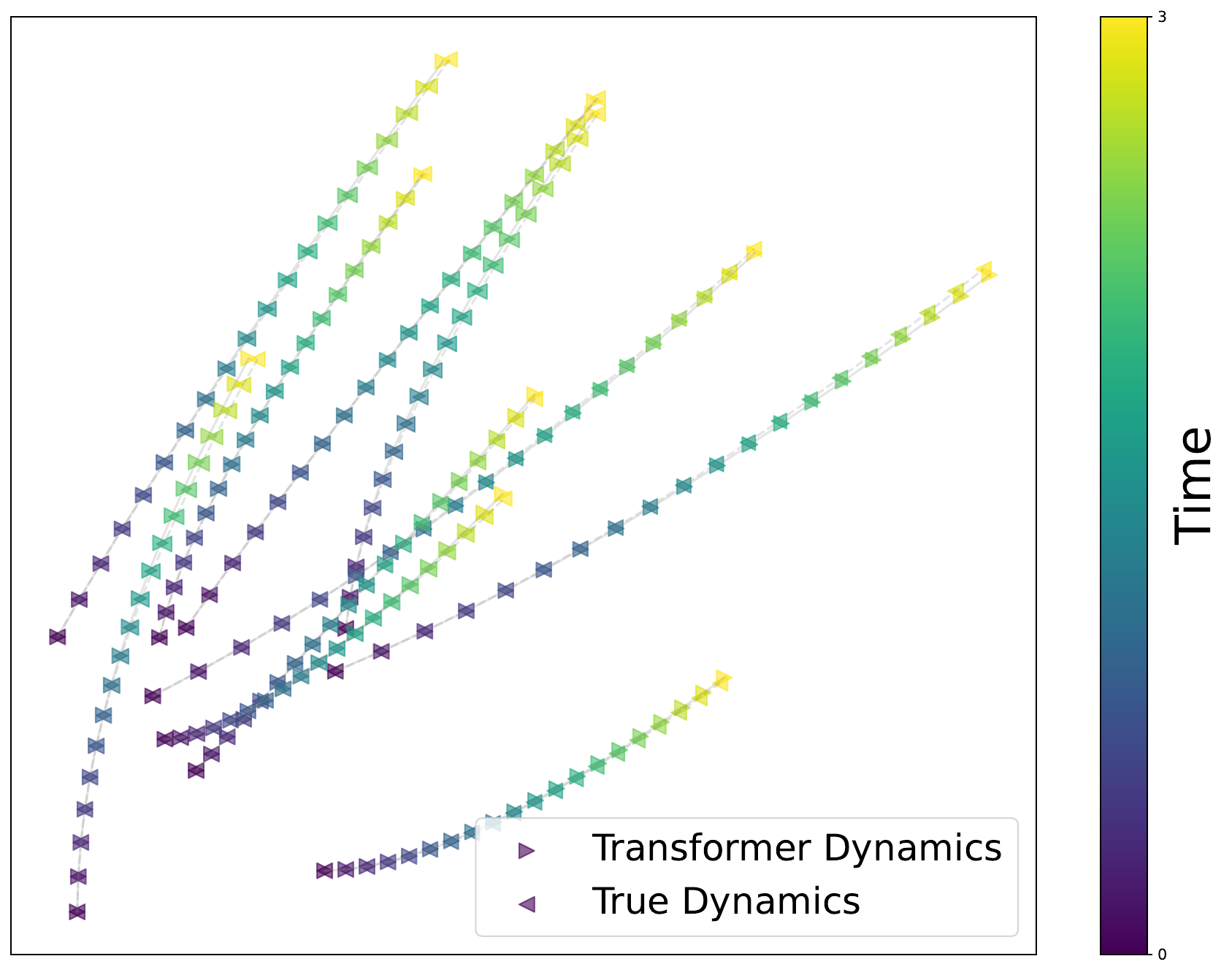}
    \caption{Figure showing the trajectories of ten particles computed for the Cucker-Smale model using the true $\mathcal{F}$ versus the transformer in lieu of $\mathcal{F}$.}
    \label{fig:CS}
\end{figure} 
 
\cref{fig:CS} shows the evolution of 30 particles using the Cucker-Smale equations \eqref{eq:CS1} along with the transformer approximation. Notably, the dynamics obtained using the transformer tracks the true dynamics near exactly. \cref{fig:time-CS} plots the $2$-norm between the positions coordinates $x,y$ and velocities $u,v$. The figure indicates that the error is generally quite small ($< 10^{-4}$), although it increases over time. This increase appears to be linear for the position coordinates, while the error in the velocity seems to plateau and even decrease slightly. Additionally, the initial interquartile range is small, but it grows over time.

Finally, we note that we trained the model with 20 particles but simulated the model with 30 particles. Hence we see that the model can generalize to more particles than that seen during training. This implies that the transformer is able to learn the interaction kernel between the particles.

Next we simulate training a two-layer neural network using a Transformer. Figure \ref{fig:NN} illustrates the training and test loss as we train the network. The blue line represents the model trained using GD, while the orange line corresponds to the model trained with the transformer. Note that the transformer model does not compute any gradients. We trained models with one hundred different random initializations. The solid lines indicate the median and the shaded regions represent the interquartile range. The right-most plot in Figure \ref{fig:NN} shows the Frobenius norm of the difference between the parameters learned using GD and the transformer. The figure demonstrates that the maximum norm of the difference is at most $5 \times 10^{-5}$, even after 500 iterations.
Therefore, we observe that the transformer has learned to approximate the dynamics of GD for a two-layer network. 
% While the dynamics are not exact, they are approximated quite well. 

%%%%%%%%%%%%%%%%%%%%%%%%%%%%%%%%%%%%%%%%%%%%%%%%%%%%%%%%%%%%%%%%%%%%%%%%%%%%%%%%%%%%%%%%%
\section{Theoretical Results}
\label{sec:theory}

Our experiments demonstrate that the transformer can approximate the finite-dimensional particle dynamics \eqref{eq:n-PartSys}. In this section, we will prove that the infinite-dimensional vector field $\mathcal{F}$ \eqref{eq:MFmodel} can be approximated by the expected output of a transformer, which we shall denote $\mathcal{T}_n(x,\mu)$. Further, we show that the solutions to the continuity equation \eqref{eq:ContEq} can be approximated by the solutions to the following \textit{approximate continuity equation}, defined using $\mathcal{T}_n$,
\begin{equation}
\label{eq:ContEqTransformer}
\frac{\partial \mu}{\partial t} + \nabla_z \cdot \left( \mathcal{T}_n(z,\mu) \mu \right) = 0, ~~ \mu(0) = \mu_0.
\end{equation}
% Denoting the solution to this equation by $\mu^{\mathcal{T}_n}(t)$, then \cref{thm:ContEqApprox} proves that $\mu^{\mathcal{T}_n}(t)$ approximates $\mu^{\mathcal{F}}(t)$, in the sense of \eqref{eq:WassMetric}.

\subsection{Lifting Transformers to the Space of Measures}

Traditionally, transformers are defined on sequences of vectors in $\mathbb{R}^d$. However, the map we wish to approximate, $\mathcal{F}$, is defined on $\Omega \times \mathcal{P}(\Omega)$. Therefore, we lift the standard transformer to act on $\Omega \times \mathcal{P}(\Omega)$ via an expectation operation. This will allow us to lift any transformer $T: \Omega^{n+1} \to \mathbb{R}^{(n+1) \times d}$ to a model $\mathcal{T}_n: \Omega \times \mathcal{P}(\Omega) \to \mathbb{R}^d$.
% The exact description of the architecture can be found in Appendix~\ref{app:transformer}. 
% Before we do this, we briefly review the standard transformer architecture; the following definitions are adapted from \cite{alberti2023sumformer}. The core component of the transformer is the multi-head self-attention mechanism.
% \comr{Can we shorten all this for space?}

% We now introduce the  \emph{expected transformer} $\mathcal{T}_n$.  %This formulation bridges the gap between the finite-dimensional transformer and the infinite-dimensional mean-field setting.

\begin{definition}[Expected Transformer] 
    Given a transformer $T: \Omega^{n+1} \to \mathbb{R}^{(n+1) \times d}$ and a prescribed sequence length $n$, define the expected transformer $\mathcal{T}_n: \Omega \times \mathcal{P}(\Omega) \to \mathbb{R}^d$ by
    \begin{align}
        \mathcal{T}_n(x,\mu) &:= \mathbb{E}_{\mathbf{z} \sim \mu^{\otimes n}} \left[ \left( T\left( [x;\ \mathbf{z}] \right) \right)_1 \right], \label{eq:NewTrnsf} \\
        &= \int_{\Omega^n} \left( T\left( [x;\ z_1, \ldots, z_n] \right) \right)_1 \, d\mu(z_1) \cdots d\mu(z_n), \nonumber
    \end{align}
    where $[x;\ \mathbf{z}]$ denotes the concatenation of $x$ and $\mathbf{z} = (z_1, \ldots, z_n)$ to form an input sequence of length $n+1$, and $\left( T\left( [x;\ \mathbf{z}] \right) \right)_1$ denotes the first output vector.
\end{definition}

% \begin{remark}[Approximating $\mathcal{T}_n(x,\mu)$]
%     Given a finite-dimensional transformer $T$, $\mathcal{T}_n$ can be empirically approximated. Specifically, consider a dataset of size $B \times n \times d$, where $B$ is the batch size, and the sequence elements of size $n\times d$ are drawn from $\mu^{\otimes n}$. Then, given a point $x \in \R^d$, it is added to each sequence in the batch. The entire batch is then processed simultaneously, and the mean is computed. In this manner, inference with the expected transformer is straightforward. \comr{Do we need this?}
% \end{remark}

Some prior works \cite{geshkovski2023mathematical, furuya2024transformers} have defined transformers $\hat{T}: \Omega \times \mathcal{P}(\Omega) \to \mathbb{R}^d$ through a continuous version of self-attention $\Gamma$. For $x \in \Omega$ and $\mu \in \mathcal{P}(\Omega)$, $\Gamma$ is defined as
\begin{equation} 
    \Gamma(x, \mu) := x + \frac{1}{Z(x, \mu)} \int_{\Omega} \operatorname{Att}\left( [x;\ y] \right) \, d\mu(y),
\end{equation}
where $Z(x, \mu)$ is a normalization factor and $\operatorname{Att}$ is the attention layer \eqref{eq:Att}. Then the transformer $\hat{T}$ in \citet{geshkovski2023mathematical, furuya2024transformers} is defined as
\begin{equation}
\label{eq:TrnsfPeyre}
    \hat{T}(x,\mu) := \operatorname{FC}_{\xi_L} \circ \Gamma_{\theta_L} \circ \cdots \circ \operatorname{FC}_{\xi_1} \circ \Gamma_{\theta_1} (x),
\end{equation}
where $\Gamma_{\theta_j}$ and $\operatorname{FC}_{\xi_j}$ are attention and feed-forward layers with parameters $\theta_j$ and $\xi_j$, respectively. 
% When $\mu$ is an empirical measure (a sum of Dirac deltas), this formulation reduces to the standard transformer definition \eqref{eq:TransfDef}. 
% However, due to the nested expectations, computing, or even approximating, $\hat{T}(x,\mu)$ is not straightforward.

% \paragraph{Goals:} Our objectives are twofold: 
% % \comr{inlcude something from numerics}
% \begin{enumerate}[nosep]
%     \item 

% \end{enumerate}

\subsection{Assumptions}

To state our result, we require some assumptions on the map $\mathcal{F}$ in \eqref{eq:MFmodel}. The key assumption we make is that $\mathcal{F}$ is Lipschitz continuous with respect to $\mu$, the probability measure. To formalize this, we require a metric on the space of probability measures $\mathcal{P}(\Omega)$. A commonly used metric is the $p$-Wasserstein distance.

\begin{definition}[$1$-Wasserstein Distance] Given two probability measures $\mu, \nu \in \mathcal{P}_p(\Omega)$ on a metric space $(\Omega, \|\cdot\|_2)$, where $d$ is the metric on $\Omega$, the $1$-Wasserstein distance between $\mu$ and $\nu$ is defined as 
\begin{equation} 
\label{eq:WassMetric}
    \mathcal{W}_{1} (\mu, \nu) = \inf_{\gamma \in \Pi(\mu, \nu)} \int \limits_{\Omega \times \Omega} \|x - y\|_2  d\gamma(x, y), 
\end{equation} 
where $\Pi(\mu, \nu)$ denotes the set of all couplings (transport plans) $\gamma$ on $\Omega \times \Omega$ with marginals $\mu$ and $\nu$. \end{definition}

% \comr{Is this true or required? When $\Omega$ is compact, the $1$-Wasserstein distance metrizes the weak convergence of probability measures (Theorem 6.9 in \citet{Villani2008Optimal}).  That is, let $\mu_n \in \mathcal{P}_p(\Omega)$ be a sequence of measures, then $\mu_n \to \mu$ weakly if and only if $\mathcal{W}_p(\mu_n, \mu) \to 0$. 
% Additionally, since $\Omega$ is compact, convergence in $\mathcal{W}_p$ implies convergence in $\mathcal{W}_q$, for all $q < p$.}

We now state the main assumptions required for our analysis: \vspace{-1.5em}
\begin{assumption}[Regularity and Growth Conditions] \label{assump:1} Assume that the vector field $\mathcal{F}: \Omega \times \mathcal{P}_p(\Omega) \to \mathbb{R}^d$ satisfies the following conditions:

\begin{enumerate}[label=\emph{\alph*)}, nosep, leftmargin=*] 
    \item \label{assumption:Lipschitz} (\textbf{Lipschitz Continuity}) There exists a constant $\mathscr{L}$ such that for all $x, y \in \Omega$ and $\mu, \nu \in \mathcal{P}_p(\Omega)$,
    \begin{equation*} 
        \| \mathcal{F}(x,\mu) - \mathcal{F}(y,\nu) \|_{2} \leq \mathscr{L} \left( \| x - y \|_{2} + \mathcal{W}_1(\mu, \nu) \right).
    \end{equation*} 
        This condition implies the following.
   \item \label{assumption:lingrowth} (\textbf{Linear Growth}) There exists a constant $\mathscr{M} > 0$ such that for all $x \in \Omega$ and $\mu \in \mathcal{P}_p(\Omega)$,
    \begin{equation*} 
        \| \mathcal{F}(x,\mu) \|_{2} \leq \mathscr{M} \left( 1 + \| x \|_{2} + M_1(\mu) \right), 
    \end{equation*} 
    where $M_1(\mu) := \int_{\Omega} | y |  d\mu(y)$ is the first moment of $\mu$.
    % \item  \label{assumption:C1} \comr{check if this is needed} (\textbf{Smoothness}) For each $\mu \in \mathcal{P}_p(\Omega)$, the function $x \mapsto \mathcal{F}(x, \mu)$ is continuously differentiable on $\Omega$; that is, $\mathcal{F}(\cdot, \mu) \in C^1(\Omega)$.
\end{enumerate} \end{assumption}

These assumptions are standard in the analysis of mean-field models and differential equations in general and are needed to guarantee the uniqueness of solutions. 

\begin{remark}[Lipschitz Implies Linear Growth]
    Since $\Omega$ is compact, Assumption~\ref{assump:1}\ref{assumption:Lipschitz} implies Assumption~\ref{assump:1}\ref{assumption:lingrowth}. However, in some cases, we only need the weaker assumption of Linear Growth, hence we explicitly state it.
\end{remark}

\begin{remark}[Example Models] \label{rem:egs}
    These assumptions are satisfied by the Cucker-Smale model as well as the training 2-layer neural network model. Specifically, Theorem 2 of \citet{piccoli2009vehicular} shows that Cucker-Smale model satisfies Assumption~\ref{assump:1}\ref{assumption:Lipschitz} and Lemma~2 of  \citet{mei2019mean} shows that the training 2-layer neural network model \eqref{eq:NNmodel} also satisfies \cref{assump:1}\ref{assumption:Lipschitz}
\end{remark} 

To establish our approximation results for functions $\mathcal{H} : \Omega \times \mathcal{P}(\Omega) \to \mathbb{R}^d$, we define the following norm.

\begin{definition} Given a function $\mathcal{H} : \Omega \times \mathcal{P}(\Omega) \to \mathbb{R}^d$, we define its norm by 
\begin{equation} \label{eq:norm} 
    \| \mathcal{H} \|_* := \sup_{x \in \Omega} \sup_{\mu \in \mathcal{P}(\Omega)} \| \mathcal{H}(x, \mu) \|_{2}.
\end{equation} 
\end{definition}

\subsection{Approximating the Vector Field $\mathcal{F}$}

To approximate the vector field $\mathcal{F}$, we define, for a fixed $n$, the finite-dimensional map $F_n : \Omega^n \to \mathbb{R}^{d \times n}$ as
\begin{equation}
\label{eq:Fmap}
        F_{n}(\mathbf{z}) := \begin{bmatrix} \mathcal{F}(z_1, \nu^n_\mathbf{z}) & \ldots & \mathcal{F}(z_n, \nu^n_\mathbf{z}) \end{bmatrix}
\end{equation}
We now state our main result regarding the universal approximation of the mean-field vector field $\mathcal{F}$ by the expected transformer $\mathcal{T}_n$.

\begin{restatable}[Universal Approximation]{theorem}{universal}\label{thm:universal}
    Let $\Omega \subset \R^d$ be a compact set containing $0$. Let $\mathcal{F}: \Omega \times \mathcal{P}(\Omega) \to \R^d$ satisfy \cref{assump:1}\ref{assumption:Lipschitz} for a given $p$. Given a transformer $T: \Omega^{n+1} \to \mathbb{R}^{(n+1) \times d}$ let
    \begin{equation}
    \label{eq:assump}
        \mathcal{E} := \sup_{\mathbf{z} \in \Omega^{n+1}} \|T(\mathbf{z}) - F_{n+1}(\mathbf{z}) \|_{2}
    \end{equation}
    Then, for $q > p$ there exists a constant $C(p,q,d)$, depending only on $p$, $q$, and $d$, such that for all $n \ge 1$, the corresponding continuum version $\mathcal{T}_n: \Omega \times \mathcal{P}(\Omega) \to \mathbb{R}^d$ \eqref{eq:NewTrnsf} satisfies
    \begin{align}
        &\left\|\mathcal{T}_n - \mathcal{F} \right\|_* \le \mathcal{E} +  \mathscr{L} \text{diam}(\Omega)^{p} \left(\frac{1}{n+1} + CG(n,p,q) \right), \nonumber\\ %
       &G(n,p,q) = \begin{cases} \frac{1}{\sqrt{n}} & p > d/2, \, q \neq 2p \\
        \frac{1}{\sqrt{n}} \log(n+1)  & p = d/2, \, q \neq 2p \\
        \frac{1}{\sqrt[d]{n^p}} & p < d/2,\, q \neq \frac{d}{d-p} \end{cases}\label{eq:BigConst}
    \end{align} 
\end{restatable}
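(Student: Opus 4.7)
The argument I would run is a straightforward triangle-inequality decomposition, followed by the classical rate of convergence of empirical measures in Wasserstein. Fix arbitrary $x\in\Omega$ and $\mu\in\mathcal{P}(\Omega)$, and let $\nu^{n+1}_{[x;\mathbf{z}]}$ denote the empirical measure of the augmented sample $[x;z_1,\ldots,z_n]$ when $\mathbf{z}\sim\mu^{\otimes n}$. The key observation is that, by the definition of $F_{n+1}$ in \eqref{eq:Fmap}, the first component of $F_{n+1}([x;\mathbf{z}])$ equals $\mathcal{F}(x,\nu^{n+1}_{[x;\mathbf{z}]})$, so one can write
\begin{align*}
\mathcal{T}_n(x,\mu)-\mathcal{F}(x,\mu)
&=\mathbb{E}_{\mathbf{z}}\bigl[(T([x;\mathbf{z}]))_1-(F_{n+1}([x;\mathbf{z}]))_1\bigr]\\
&\quad+\mathbb{E}_{\mathbf{z}}\bigl[\mathcal{F}(x,\nu^{n+1}_{[x;\mathbf{z}]})-\mathcal{F}(x,\mu)\bigr].
\end{align*}
By Jensen's inequality and the defining assumption \eqref{eq:assump}, the first expectation has norm at most $\mathcal{E}$, uniformly in $x$ and $\mu$.

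For the second expectation I would apply \cref{assump:1}\ref{assumption:Lipschitz} pointwise inside the expectation to pass from $\mathcal{F}$ to a Wasserstein distance:
$$\Bigl\|\mathbb{E}_{\mathbf{z}}\bigl[\mathcal{F}(x,\nu^{n+1}_{[x;\mathbf{z}]})-\mathcal{F}(x,\mu)\bigr]\Bigr\|_2\le \mathscr{L}\,\mathbb{E}_{\mathbf{z}}\bigl[\mathcal{W}_1(\nu^{n+1}_{[x;\mathbf{z}]},\mu)\bigr].$$
To peel off the contribution of the fixed extra particle $x$, introduce the mixture $\tilde{\mu}:=\tfrac{n}{n+1}\mu+\tfrac{1}{n+1}\delta_x$. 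Joint convexity of $\mathcal{W}_1$ gives $\mathcal{W}_1(\nu^{n+1}_{[x;\mathbf{z}]},\tilde{\mu})\le \tfrac{n}{n+1}\mathcal{W}_1(\nu^n_\mathbf{z},\mu)$, while the explicit coupling that leaves the $\tfrac{n}{n+1}$ fraction of $\mu$ untouched and transports the remaining $\tfrac{1}{n+1}$ mass onto $\delta_x$ yields $\mathcal{W}_1(\tilde{\mu},\mu)\le \tfrac{\operatorname{diam}(\Omega)}{n+1}$. Applying the triangle inequality for $\mathcal{W}_1$ then reduces the problem to bounding $\mathbb{E}_{\mathbf{z}}[\mathcal{W}_1(\nu^n_\mathbf{z},\mu)]$ and already produces the additive $\tfrac{1}{n+1}$ contribution seen in the theorem.

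For the remaining piece I would invoke the Fournier--Guillin rate for the convergence of empirical measures in $p$-Wasserstein, which is precisely what produces the three-regime function $G(n,p,q)$ according to whether $p$ is larger, equal to, or smaller than $d/2$, together with a dependence on some $q$-th moment of $\mu$. Since $\Omega$ is compact and contains $0$, any $q$-moment is bounded by $\operatorname{diam}(\Omega)^q$, and raising to the appropriate power yields the $\operatorname{diam}(\Omega)^p$ prefactor. Assembling the three bounds finishes the proof.

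The main obstacle is this last step: extracting the exact rate $G(n,p,q)$ and the correct power of $\operatorname{diam}(\Omega)$ from Fournier--Guillin, since one must align the Wasserstein order used in the Lipschitz hypothesis with the order at which sharp empirical-measure rates are available, and convert a $\mathcal{W}_p^p$ bound into a $\mathcal{W}_1$ bound through Jensen's inequality without losing the $n$-dependence. The triangle-inequality and convexity steps, in contrast, are routine.
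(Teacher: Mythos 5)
Your proposal is correct and follows essentially the same route as the paper's proof: the same triangle-inequality decomposition through $(F_{n+1}([x;\mathbf{z}]))_1=\mathcal{F}(x,\nu^{n+1}_{[x;\mathbf{z}]})$, Jensen to extract $\mathcal{E}$, an $O\!\left(\tfrac{\operatorname{diam}(\Omega)}{n+1}\right)$ cost for the extra particle (which you obtain via the mixture $\tilde{\mu}$ and convexity, whereas the paper bounds $\mathcal{W}_1(\nu^n_{\mathbf{z}},\nu^{n+1}_{(x,\mathbf{z})})$ directly — the same estimate), and the Fournier--Guillin rate for $\mathbb{E}_{\mathbf{z}}[\mathcal{W}(\nu^n_{\mathbf{z}},\mu)]$ giving $G(n,p,q)$ with the moment bounded by $\operatorname{diam}(\Omega)$. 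The "obstacle" you flag about aligning the $\mathcal{W}_1$-Lipschitz hypothesis with the order-$p$ empirical-measure rate is glossed over in exactly the same way in the paper's own proof, so it does not distinguish your argument from theirs.
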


% \begin{proof}[Proof Sketch]
%     The main steps of the proof are as follows. For a given measure $\mu \in \mathcal{P}(\Omega)$, consider its empirical approximation $\nu^n_{\mathbf{z}}$, where $\mathbf{z} = (z_1, \ldots, z_n)$ are i.i.d.\ samples from $\mu$. First, leveraging \cref{assump:1}\ref{assumption:Lipschitz}, we demonstrate that for any $x \in \Omega$, the difference $\| \mathcal{F}(x, \mu) - \mathcal{F}(x, \nu^n_{\mathbf{z}})\|_{\infty}$ is bounded in terms of $\mathcal{W}_p(\mu, \nu^n_{\mathbf{z}})$. Moreover, this distance $\mathcal{W}_p(\mu, \nu^n_{\mathbf{z}})$ is itself bounded by applying Theorem 1 from \citet{fournier2015rate}. 
    
%     Next, we introduce the finite-dimensional map $F_{n+1}(x, \mathbf{z})$ \eqref{eq:Fmap}. Using \cref{assump:1}\ref{assumption:Lipschitz}, we prove that the first component $(F_{n+1}(x, \mathbf{z}))_1$ approximates $\mathcal{F}(x, \nu^n_{\mathbf{z}})$, in the $\infty$-norm.

%     In the final step, we leverage the fact that the transformer $T$ approximates $F_{n+1}$ uniformly to conclude that $\mathcal{T}_n(x, \mu)$, approximates $\mathcal{F}(x, \mu)$ within the state bound.
% \end{proof}

\begin{remark}
In the \cref{thm:universal} above, we observe that the approximation of infinite-dimensional maps $\mathcal{F}$ by finite-dimensional transformers $T$ depends on two key quantities. First, it depends on how well $T$ approximates the finite-dimensional map $F_n$. This corresponds to the $\mathcal{E}$ term in the bound \eqref{eq:assump}. Due to universal approximation results for transformers, this term can be made arbitrarily small. For example, see Theorem 4.3 in \citet{alberti2023sumformer}.

Second, the approximation  depends on the convergence rate of $\mathcal{W}_1(\mu, \nu^n_{\mathbf{z}})$, which, by Theorem~1 of \citet{fournier2015rate} is tight and depends on $n$, $p$, $q$, and $d$. Furthermore, we observe that the stronger the regularity of the map $\mathcal{F}$ (i.e., larger the constant $p$), the better the approximation. The best rates are obtained for $p = \left\lfloor \frac{d}{2}+1 \right\rfloor$. In which case the rate becomes independent of the dimension $d$. Consequently, if we use longer sequences, we obtain an improved approximation of the vector field $\mathcal{F}$. 
\end{remark}

We empirically verify Theorem 4.7 using the transformer trained for the Cucker-Smale model in Section 3.1. We test five $(x,y) \in [0.1]^2$  values and vary $(u,v)$ over an $11 \times 11$ grid. Using Gibbs sampling, we approximate the expected transformer and compute the error with the true vector field \eqref{eq:CS_F}. Figure~\ref{fig:hm} shows the heat maps, with a maximum error of 0.025, while the bound from Theorem~\ref{thm:universal} is at least 0.05.

\begin{figure*}
    \includegraphics[width = 0.19\linewidth]{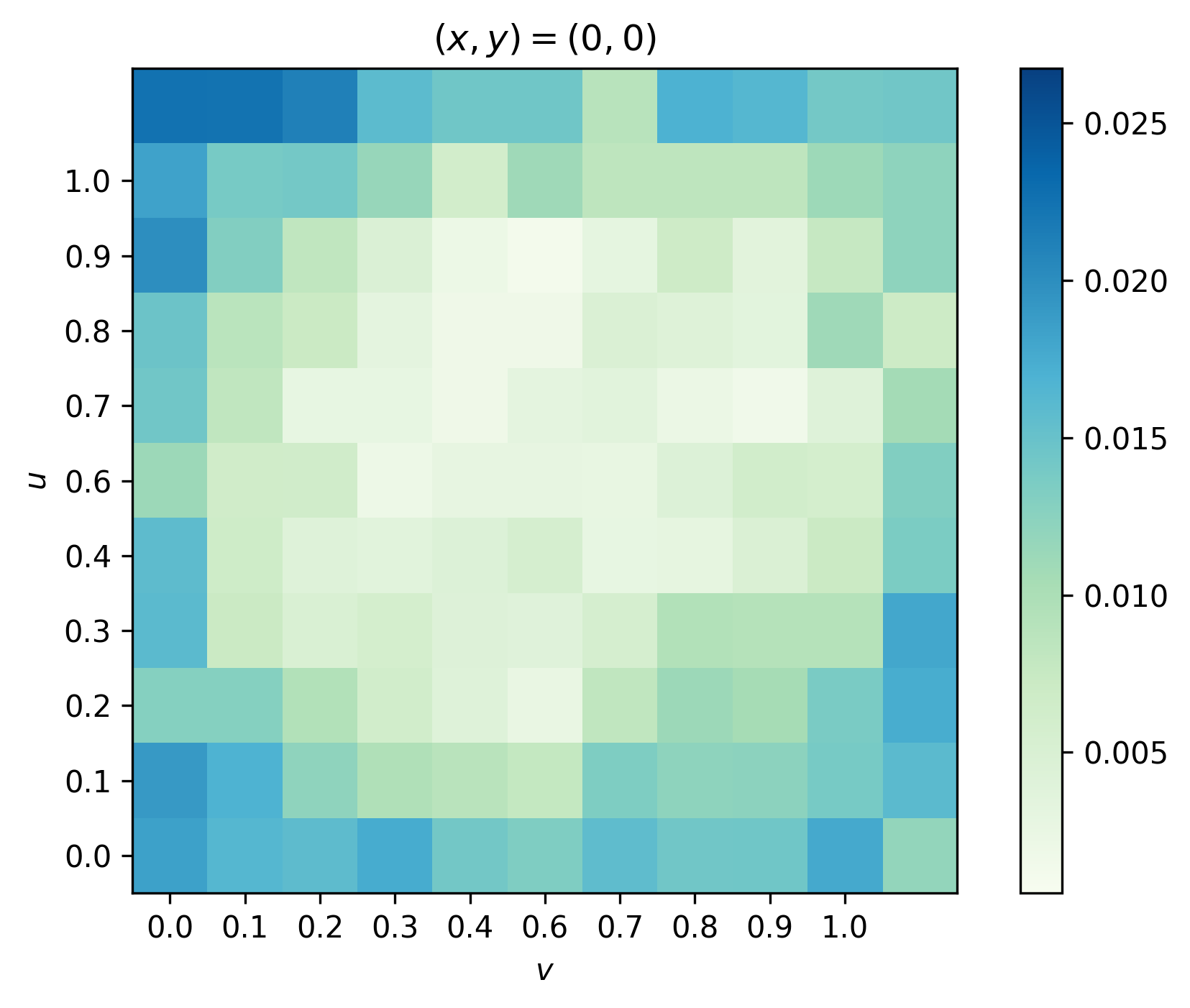}
    \includegraphics[width = 0.19\linewidth]{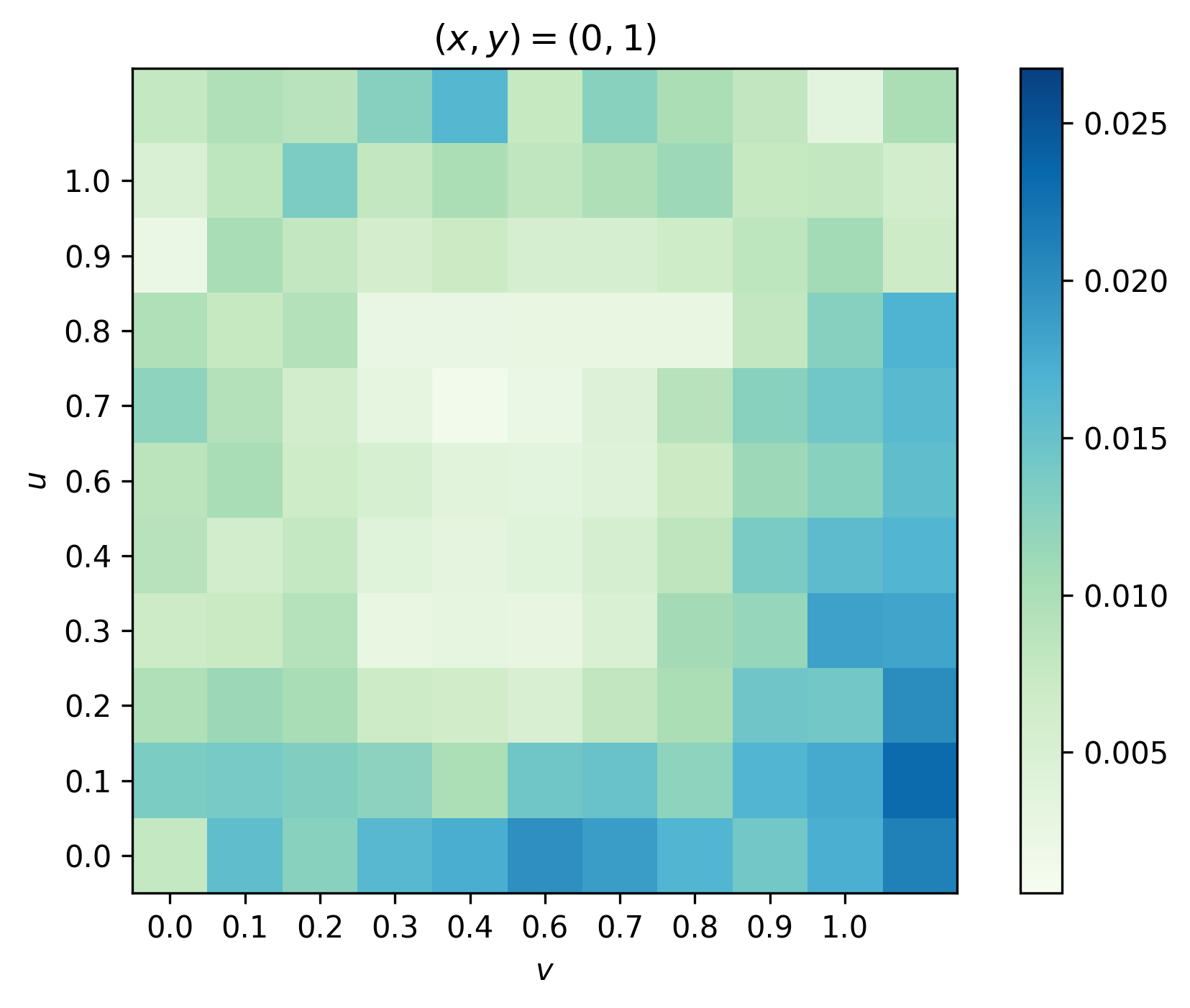}
    \includegraphics[width = 0.19\linewidth]{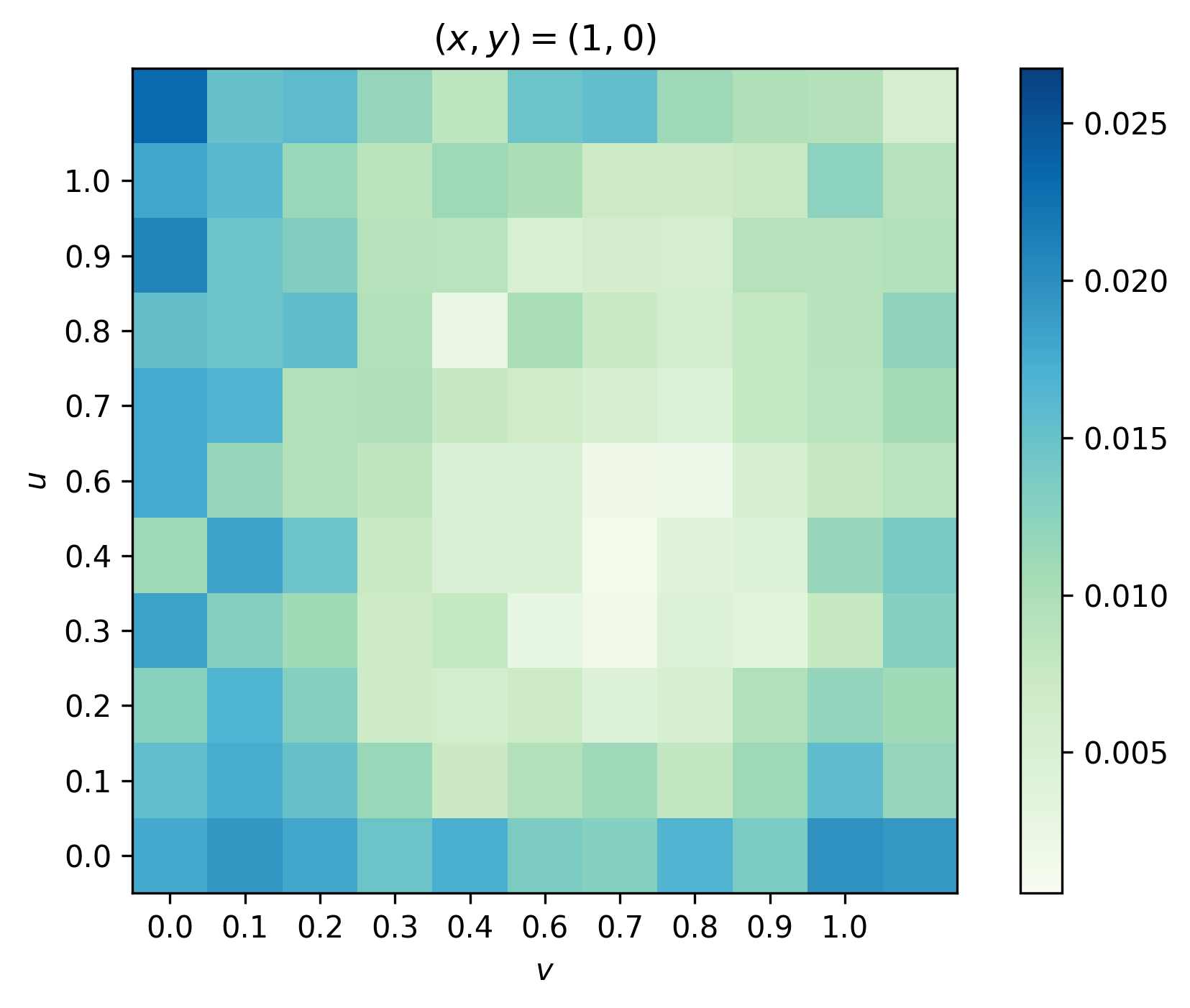}
    \includegraphics[width = 0.19\linewidth]{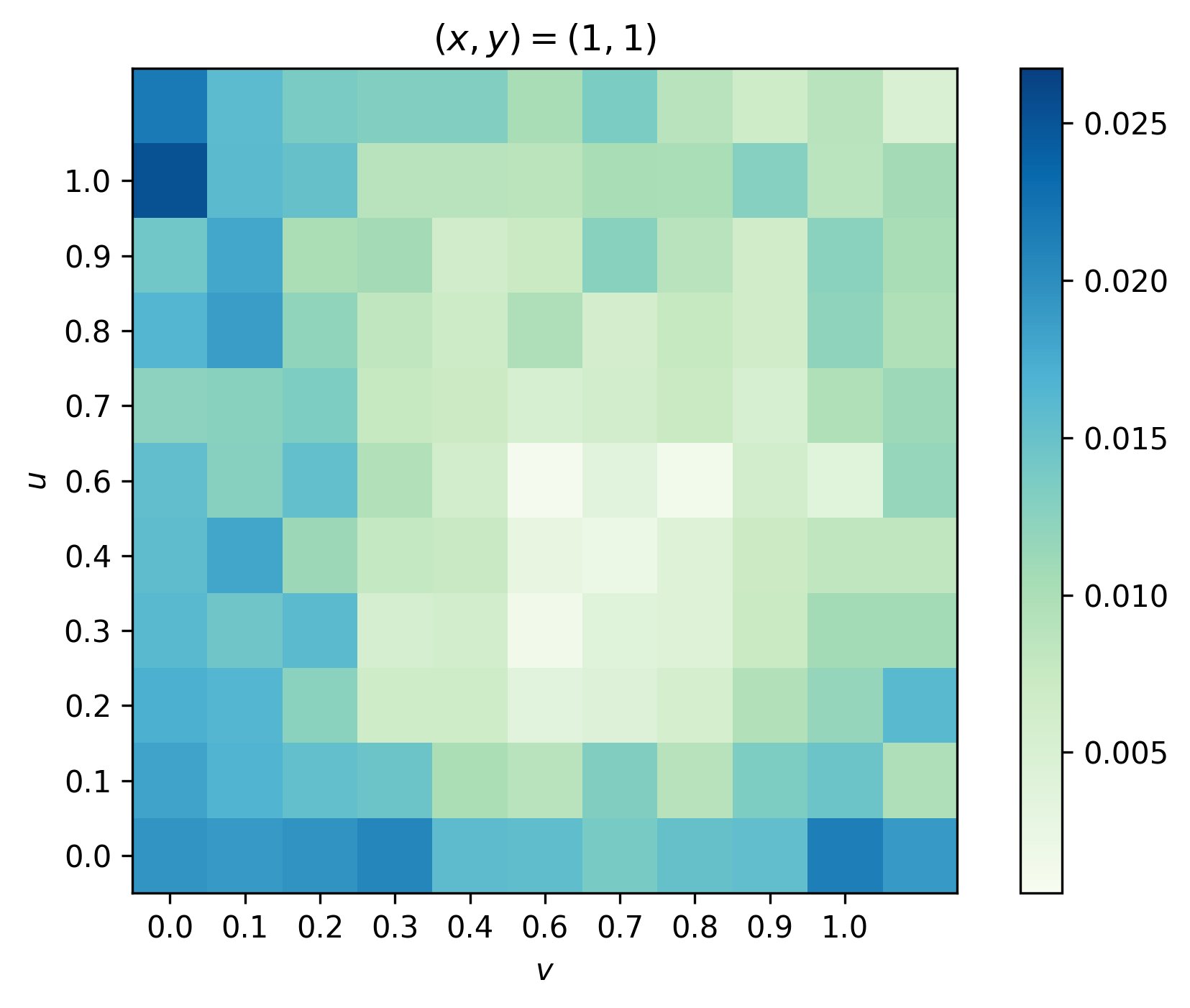}
    \includegraphics[width = 0.19\linewidth]{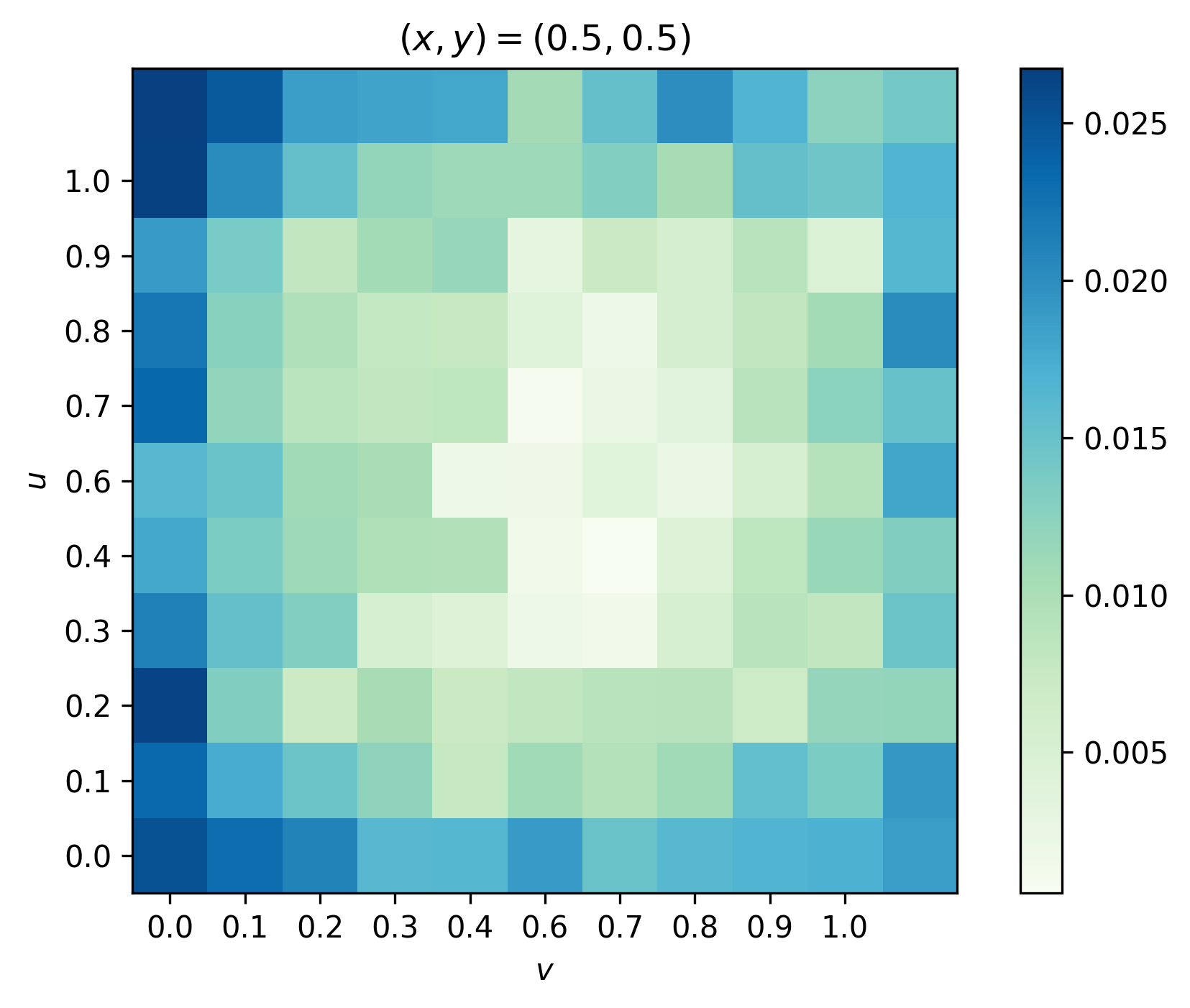}
    \caption{Figure shows the error $\|\mathcal{T}_n - \mathcal{F}\|_*$ for the CS model. Here $(x,y)$ is held fixed while $(u,v)$ is varied in a $11\times 11$ grid.}
    \label{fig:hm}
\end{figure*}

\paragraph{Comparison with Result From \cite{furuya2024transformers}:} 
The concurrent work \cite{furuya2024transformers} proves the following approximation result for continuous maps $\mathcal{F}$ by the continuum version of the transformer $\hat{T}$ \eqref{eq:TrnsfPeyre}.
\begin{theorem}
\label{thm:PeyreThm}
    Let $\Omega \subset \R^d$ be a compact set and $F^* : \Omega \times \mathcal{P}(\R^d) \to \R^{d}$ be continuous, where $\mathcal{P}(\R^d)$ is endowed with the weak* topology. Then for all $\varepsilon > 0$, there exist $l$ and parameters $(\theta_j, \xi_j)_{j=1}^l$ such that 
    \[ 
        \| \hat{T}(x, \mu) - F^*(x, \mu) \|_2 \leq \varepsilon, ~~ \forall (x, \mu) \in \Omega \times \mathcal{P}(\R^d)
    \]
    where the parameters $\theta_j, \xi_j$ depend on the dimension $d$.
\end{theorem}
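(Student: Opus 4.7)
The plan is to prove this universal approximation theorem by a three-step Stone–Weierstrass strategy lifted to the space of measures, exploiting the fact that the continuum attention $\Gamma$ naturally computes integrals against $\mu$. The conceptual idea is that any continuous functional on $\Omega \times \mathcal{P}(\mathbb{R}^d)$ can be approximated uniformly by a fixed continuous function of finitely many moment-type functionals $\mu \mapsto \int \phi_j\, d\mu$, and that $\hat T$ can implement both pieces: the moment extraction via a stack of $\Gamma$-layers, and the outer nonlinearity via the feedforward blocks $\operatorname{FC}_{\xi_j}$.

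First I would reduce the infinite-dimensional approximation problem to a finite-dimensional one. Fix a tight exhaustion $\mathcal{K}_n \subset \mathcal{P}(\mathbb{R}^d)$ (or simply work on the image of $\mathcal{P}(\mathbb{R}^d)$ in a suitable compactification) and apply a Stone–Weierstrass-type theorem to the subalgebra of $C(\Omega \times \mathcal{K}_n)$ generated by products of polynomials in $x$ and functionals of the form $\mu \mapsto \int_{\mathbb{R}^d} \phi(y)\, d\mu(y)$ with $\phi \in C_c(\mathbb{R}^d)$. This subalgebra separates points (two distinct measures are distinguished by some such $\phi$) and contains constants, so for each $\varepsilon > 0$ there exist $K$, continuous test functions $\phi_1, \ldots, \phi_K$, and a continuous map $G : \Omega \times \mathbb{R}^K \to \mathbb{R}^d$ with
\[
\sup_{(x,\mu)}\Bigl\| F^*(x,\mu) - G\bigl(x,\ \textstyle\int \phi_1\,d\mu, \ldots, \int \phi_K\,d\mu\bigr) \Bigr\|_2 \le \varepsilon/3.
\]

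Second I would construct an attention stack that realizes these moment coordinates inside the token features. The continuum attention layer outputs $x + Z(x,\mu)^{-1}\int \operatorname{Att}([x;y])\,d\mu(y)$, so by choosing the query/key parameters in $\theta_j$ to make the softmax normalizer $Z(x,\mu)$ approximately constant (e.g.\ by driving key weights to zero so the attention weights are nearly uniform) and by choosing the value parameters so that $\operatorname{Att}([x;y])$ matches the target test function $\phi_j(y)$ on the support of $\mu$, each attention head produces (up to a controllable additive offset) the integral $\int \phi_j(y)\,d\mu(y)$. Using $K$ such heads, or equivalently $K$ layers, together with residual connections to preserve the $x$-coordinate, I can embed the vector $\bigl(x, \int \phi_1\,d\mu, \ldots, \int \phi_K\,d\mu\bigr) \in \Omega \times \mathbb{R}^K$ into the features, within error $\varepsilon/3$ (after propagating through the Lipschitz feedforward blocks). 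Finally, by the classical MLP universal approximation theorem, I choose the feedforward parameters $\xi_l$ in the terminal block to approximate $G$ on the relevant compact set in $\Omega \times \mathbb{R}^K$ to within $\varepsilon/3$. Adding the three errors and using Lipschitz continuity of $\Gamma$ and $\operatorname{FC}$ in their intermediate arguments yields the overall $\varepsilon$-bound, with $l$, head count, and parameter dimensions all depending on $d$ through $K$ and the modulus of continuity of $F^*$.

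The main obstacle, and the reason the parameter dependence on $d$ cannot be removed, is the normalizer $Z(x,\mu)$ in the definition of $\Gamma$: the attention layer is not a pure linear functional of $\mu$ but a ratio, so moment functionals $\mu \mapsto \int \phi_j\,d\mu$ can only be realized up to softmax-induced distortion that depends jointly on $x$ and on the concentration of the attention weights on $\operatorname{supp}(\mu)$. Controlling this distortion \emph{uniformly} over all $\mu \in \mathcal{P}(\mathbb{R}^d)$ — which in particular are not tight — is delicate and forces a careful scaling argument: the query/key temperature must be small enough that $Z(x,\mu)$ is nearly $\mu$-independent, yet large enough that $\operatorname{Att}$ can still express a set of test functions $\phi_1, \ldots, \phi_K$ rich enough to separate measures on $\mathbb{R}^d$. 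Balancing these requirements is the main technical step and is precisely where the dimension $d$ enters the final parameter counts.
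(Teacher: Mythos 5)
A preliminary remark: this statement is not proved in the paper at all --- it is quoted from the concurrent work of \citet{furuya2024transformers} purely for comparison with \cref{cor:universal} --- so there is no internal proof to measure your attempt against; the assessment below is of your argument on its own terms.

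There are two genuine gaps. First, the Stone--Weierstrass reduction fails as stated because $\mathcal{P}(\R^d)$ with the weak* topology is \emph{not} compact, and the theorem demands uniformity over all of $\mathcal{P}(\R^d)$. The algebra generated by the functionals $\mu \mapsto \int \phi_j\, d\mu$ with $\phi_j \in C_c(\R^d)$ cannot uniformly approximate a general weak*-continuous $F^*$ on the whole space: along a sequence $\mu_n = \delta_{x_n}$ with $\|x_n\| \to \infty$, every compactly supported moment tends to $0$, so any $G\bigl(x, \int \phi_1 d\mu, \ldots, \int \phi_K d\mu\bigr)$ becomes asymptotically constant in $\mu$, while $F^*(x,\delta_{x_n})$ may keep oscillating (e.g.\ $F^*$ built from $\int \sin(\|y\|)\, d\mu(y)$). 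Your parenthetical fix --- ``tight exhaustion $\mathcal{K}_n$ or a suitable compactification'' --- does not close this: proving the bound on each tight set $\mathcal{K}_n$ gives no uniform bound over $\mathcal{P}(\R^d)$, and $F^*$ need not extend continuously to the weak* closure (the sub-probability measures), so Stone--Weierstrass on the compactification is not applicable. Controlling non-tight families is exactly where the real difficulty of the quoted theorem lies, and your outline defers it rather than resolving it.

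Second, the attention-realization step does not go through in the architecture \eqref{eq:TrnsfPeyre} as you describe it. In $\Gamma(x,\mu) = x + Z(x,\mu)^{-1}\int \operatorname{Att}([x;y])\, d\mu(y)$ the value map inside $\operatorname{Att}$ is \emph{linear} in $y$; even after driving the key weights to zero so the softmax weights are uniform, a single head can only produce (distorted) linear moments of the form $\int W_V y \, d\mu(y)$, possibly modulated by exponential-of-bilinear softmax factors --- it cannot be ``chosen so that $\operatorname{Att}([x;y])$ matches the target test function $\phi_j(y)$'' for an arbitrary $\phi_j \in C_c$. Nonlinear test functions of $y$ require the measure itself to be transported through nonlinear feedforward layers across depth (which is how the construction in the cited work operates), but in the composition you are given, the $\operatorname{FC}_{\xi_j}$ act on the $x$-token path while every $\Gamma_{\theta_j}$ sees the same untransformed $\mu$; you would also need to explain how the feature width is lifted above $d$ to store the $K$ moment coordinates. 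So the central mechanism of your step two --- extracting $\int \phi_j\, d\mu$ for a Stone--Weierstrass family $\{\phi_j\}$ with the given layers --- is not justified, and with it the final $\varepsilon/3$ assembly collapses.
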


To compare with Theorem \ref{thm:PeyreThm}, we state the following corollary to \cref{thm:universal}. 

\begin{corollary} \label{cor:universal} Let $\varepsilon > 0$ and $n \ge 1$. Let $\Omega \subset \R^d$ be a compact set containing $0$. Let $\mathcal{F} : \Omega \times \mathcal{P}(\Omega) \to \R^d$ satisfy Assumption \ref{assump:1}\ref{assumption:Lipschitz} for a given $p$. Then there exists a transformer $T$ with depth $\Theta(1)$, one attention layer with width $\Theta(d)$ such that the expected transformer $\mathcal{T}_n$ satisfies \eqref{eq:BigConst} with $\mathcal{E} = \varepsilon$.
% \begin{align*}
%     \left\|\mathcal{T}_n - \mathcal{F} \right\| \le \varepsilon + C \mathscr{L} \text{diam}(\Omega)^{p}  \left(\frac{1}{n^{\frac{q-p}{q}}} + G(n,p,q) \right), \\G(n,p,q) = \begin{cases} n^{-1/2}  & p > d/2, \, q \neq 2p \\
%     n^{-1/2} \log(n+1)  & p = d/2, \, q \neq 2p \\
%     n^{-p/d}  & p < d/2,\, q \neq \frac{d}{d-p} \end{cases}.
% \end{align*}
\end{corollary}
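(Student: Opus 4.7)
The plan is to combine Theorem~\ref{thm:universal} with a standard universal approximation theorem for transformers on permutation-equivariant continuous maps, applied to the finite-dimensional map $F_{n+1}$ defined in \eqref{eq:Fmap}. Essentially, the corollary is a packaging of Theorem~\ref{thm:universal} where the ``$\mathcal{E}$'' slot is filled by an existing approximation result.

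First, I would verify that $F_{n+1}: \Omega^{n+1} \to \R^{d\times(n+1)}$ is continuous and permutation-equivariant on the compact domain $\Omega^{n+1}$. Permutation equivariance is immediate: permuting the entries of $\mathbf{z}$ leaves the empirical measure $\nu^{n+1}_{\mathbf{z}}$ invariant and only permutes the columns of $F_{n+1}(\mathbf{z})$. Continuity follows from Assumption~\ref{assump:1}\ref{assumption:Lipschitz}: for any $\mathbf{z}, \mathbf{z}' \in \Omega^{n+1}$, the coordinate-wise coupling gives $\mathcal{W}_1(\nu^{n+1}_\mathbf{z}, \nu^{n+1}_{\mathbf{z}'}) \le \frac{1}{n+1}\sum_i \|z_i - z_i'\|_2$, and Lipschitz continuity of $\mathcal{F}$ in both arguments then bounds $\|F_{n+1}(\mathbf{z}) - F_{n+1}(\mathbf{z}')\|_2$ by a multiple of $\|\mathbf{z} - \mathbf{z}'\|_2$.

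Next, I would invoke an off-the-shelf universal approximation theorem for transformers on continuous permutation-equivariant maps over a compact set; the natural reference is Theorem~4.3 of \citet{alberti2023sumformer}, which is explicitly flagged in the remark following Theorem~\ref{thm:universal}. Applied to $F_{n+1}$ on $\Omega^{n+1}$, this produces a transformer $T$ consisting of a single self-attention layer of width proportional to $d$ (sufficient to carry the $d$ coordinates of each input token) composed with a constant number of token-wise feed-forward layers, such that $\sup_{\mathbf{z} \in \Omega^{n+1}} \|T(\mathbf{z}) - F_{n+1}(\mathbf{z})\|_2 \le \varepsilon$. This is exactly the bound $\mathcal{E} \le \varepsilon$ appearing in \eqref{eq:assump}. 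Plugging this $T$ into Theorem~\ref{thm:universal} then yields the claimed inequality \eqref{eq:BigConst} for $\mathcal{T}_n$ with the stated architectural budget.

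The main obstacle is not the conceptual combination of the two theorems, but the bookkeeping required to verify that the specific architectural claims (depth $\Theta(1)$, a single attention layer of width $\Theta(d)$) actually follow from the Sumformer construction rather than a larger architecture. This reduces to checking that the Sumformer decomposition applied to $F_{n+1}$ can absorb the token-dimension scaling into the feed-forward blocks while leaving the attention layer's width linear in $d$; once that inspection is done, the remainder of the proof is a direct substitution.
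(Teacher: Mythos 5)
Your proposal matches the paper's intended argument: the corollary is indeed obtained by noting that $F_{n+1}$ is continuous (via Assumption~\ref{assump:1}\ref{assumption:Lipschitz}) and permutation-equivariant, invoking the Sumformer universal approximation result (Theorem~4.3 of \citet{alberti2023sumformer}, cited in the remark preceding the corollary) to get a transformer with one attention layer of width $\Theta(d)$ and constant depth achieving $\mathcal{E} \le \varepsilon$, and then substituting into Theorem~\ref{thm:universal}. The paper gives no separate proof beyond this combination, so your route is essentially identical, including the caveat about checking the architectural bookkeeping in the Sumformer construction.
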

% \begin{proof}
% Apply Theorem 4.3 from \citet{alberti2023sumformer} to construct a transformer $T$ with $\mathcal{E} \le \varepsilon$ then use Theorem~\ref{thm:universal}.
% \end{proof}

While Corollary \ref{cor:universal} and Theorem \ref{thm:PeyreThm} are about two different models, they share notable similarities while exhibiting key differences. Both results feature $\Theta(d)$ width for the attention layers, independent of $\varepsilon$ and $n$, and neither provides bounds for the width of feedforward layers. However, there are key differences. Our work leverages prior results to establish a bound on network depth, which \citet{furuya2024transformers} does not. Moreover, we note that providing a bound on the width of the feedforward network, in our case, is straightforward, owing to recent developments that provide bounds on both width and depth \cite{augustine2024survey}. 

Further, we provide detailed error rates that depend on the length of the sequence $n$ that the transformer is trained on. However, we note that \citet{furuya2024transformers} impose weaker assumptions for the map $\mathcal{F}$. 

% At the heart of our argument is the approximation result for finite-dimensional transformers can be lifted to approximation results for the expected transformer. Universal approximation of functions on $\mathbb{R}^{d\times n}$ by transformers was first proved in \cite{yun2019transformers}. The approximation was proved under the $L^p(\mathbb{R}^{d\times n})$ norm. The recent work by \cite{alberti2023sumformer}, stated below, improves this prior result by proving approximation under the uniform norm.
% \begin{theorem}[Universal Approximation by Transformer \cite{alberti2023sumformer}] \label{thm:ApproxResult}
%     Let $f: \mathbb{R}^{d\times n} \to \mathbb{R}^{d\times n}$ be a permutation equivariant function, for each $\varepsilon>0$, there exists a transformer $T$ such that,
% \begin{equation*}
% \label{eq:TransfAppox}
%     \sup_{X \in \mathbb{R}^{d \times n}} \left\|f(X) - T(X)\right\|_\infty < \varepsilon.
% \end{equation*}
% \end{theorem}

\subsection{Approximating the Mean Field Dynamics}

We build upon our previous approximation results to prove that the solution $\mu^{\mathcal{T}_n}(t)$ to the approximate continuity equation \eqref{eq:ContEqTransformer} approximates the solution $\mu^{\mathcal{F}}(t)$ to continuity equation \eqref{eq:ContEq}. To formalize this, we first introduce an appropriate notion of a solution to the continuity equation.

\begin{definition} \label{def:lagsol}
    Let $\mu \in C([0,\tau];\mathcal{P}_p(\R^d))$ be a measure-valued function,  is called a \emph{Lagrangian solution} of the continuity \eqref{eq:ContEq} if there exists $X:[0,\tau] \times \R^d \rightarrow \R^d$, referred to as the \textit{flow ma}p, satisfies 
    \begin{equation}
    \label{eq:LagSol}
         X(t,x) = x+  \int_0^t \mathcal{F}(X(s,x),\mu(s))ds,
    \end{equation}
    for all $x \in \R^d$ and $\mu(t) = X(t,\cdot)_{\#}\mu_0$ for all $t \in [0,\tau]$. 
\end{definition}

\paragraph{Existence and Uniqueness of Solutions} Under \cref{assump:1}\ref{assumption:Lipschitz}, it is known that there is a unique Lagrangian solution corresponding to \eqref{eq:ContEq}, see Proposition~4.8 in \citet{CAVAGNARI2022268}. As stated in Remark~\ref{rem:egs} the numerical examples considered in Section \ref{sec:Sim} satisfy this assumption. However, to ensure that solutions of \eqref{eq:ContEq} can be approximated, we also need existence of unique solutions for the approximating continuity equation \eqref{eq:ContEqTransformer}. Unfortunately, the expected transformer might not be globally Lipschitz as required in \cref{assump:1}\ref{assumption:Lipschitz} since $T[x,\mathbf{z}]$ in \eqref{eq:NewTrnsf} might not be globally Lipschitz \cite{kim2021lipschitz}. Hence, the continuity equation \eqref{eq:ContEqTransformer} with the expected transformer as the vector field may not have a unique Lagrangian solution in general. 

The goal of the following theorem is to show that if \cref{assump:1}\ref{assumption:lingrowth} holds locally on a sufficiently large set, then we can prove existence of a unique solution to \eqref{eq:ContEqTransformer}, even if the expected transformer is only locally Lipschitz. This will be used later to show the ability of the expected transformer to approximate solutions of the continuity equation \eqref{eq:ContEq}, using its universal approximation property.

\begin{restatable}{theorem}{existence}
\label{thm:existenceET}
Suppose $\mu_0 \in \mathcal{P}_c(\Omega)$ is such that the support of $\mu_0$ is contained within $B_R(0) \subset \Omega$, for some $R>0$. Additionally, let $K$ satisfy $K> (R+2\mathscr{M}\tau)e^{3\mathscr{M}\tau}$. Furthermore, assume that 
 \begin{equation} 
 \label{eq:lingrow}
        \| \mathcal{T}_n(x,\mu) \|_{2} \leq \mathscr{M} \left( 1 + \| x \|_{2} + M_1(\mu) \right), 
    \end{equation}
for all $x \in B_K(0)$ and all $\mu$ with support in $B_K(0)$.
Then, there exists a unique Lagrangian solution to the approximate continuity equation \eqref{eq:ContEqTransformer} such that $\operatorname{supp}\mu^{\mathcal{T}}(t) \subseteq B_{C_t}(0)$
for all $ t \in [0,\tau]$, where $C_t = (R+2\mathscr{M}t)e^{3\mathscr{M}t}$.
\end{restatable}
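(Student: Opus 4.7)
The plan is to reduce the problem to the globally Lipschitz setting via a truncation argument, invoke the standard existence/uniqueness theory for the continuity equation from \citet{CAVAGNARI2022268}, and then use a Grönwall estimate to show the resulting solution never leaves the region $B_K(0)$ on which the truncation is inactive.

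First, I observe that $\mathcal{T}_n$ is a composition of smooth operations (attention, softmax, feed-forward layers, and a finite-dimensional expectation), so it is locally Lipschitz in $x$ (with respect to $\|\cdot\|_2$) and in $\mu$ (with respect to $\mathcal{W}_1$) on any bounded region of the form $B_K(0) \times \{\mu : \operatorname{supp}\mu \subseteq B_K(0)\}$. I would introduce a smooth radial cutoff $\chi : \R^d \to [0,1]$ equal to $1$ on $B_K(0)$ and vanishing outside $B_{K+1}(0)$, and define a truncated vector field $\tilde{\mathcal{T}}_n$ that coincides with $\mathcal{T}_n$ whenever $x \in B_K(0)$ and $\operatorname{supp}\mu \subseteq B_K(0)$ but is globally Lipschitz and bounded on all of $\R^d \times \mathcal{P}_p(\R^d)$ elsewhere (e.g.\ by multiplying by $\chi(x)$ and replacing $\mu$ by its push-forward under a Lipschitz projection onto $B_K(0)$). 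Applying Proposition~4.8 of \citet{CAVAGNARI2022268} to $\tilde{\mathcal{T}}_n$ then yields a unique Lagrangian solution $\mu : [0,\tau] \to \mathcal{P}_p(\R^d)$ and associated flow $X(t,x)$ of the truncated continuity equation.

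Next, I would use Grönwall to confine the support of $\mu$. For $x \in \operatorname{supp}\mu_0 \subseteq B_R(0)$, while $X(s,x)$ and $\operatorname{supp}\mu(s)$ remain in $B_K(0)$, the linear growth hypothesis \eqref{eq:lingrow} gives
\[
\|X(t,x)\| \le R + \int_0^t \mathscr{M}\bigl(1 + \|X(s,x)\| + M_1(\mu(s))\bigr)\,ds.
\]
Setting $a(t) := \sup_{x \in \operatorname{supp}\mu_0} \|X(t,x)\|$ and noting that $M_1(\mu(s)) = \int \|X(s,x)\|\,d\mu_0(x) \le a(s)$, this reduces to $a(t) \le R + \mathscr{M}t + 2\mathscr{M}\int_0^t a(s)\,ds$. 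Grönwall's inequality then produces the bound $a(t) \le (R + 2\mathscr{M}t)e^{3\mathscr{M}t} = C_t$ (in a slightly slack but convenient form), which by the choice of $K$ satisfies $C_t \le C_\tau < K$ for all $t \in [0,\tau]$. A bootstrap on the first exit time $t^* := \inf\{t : a(t) \ge K\}$ shows $t^* > \tau$, so along the entire flow $\tilde{\mathcal{T}}_n = \mathcal{T}_n$; hence $\mu$ is a Lagrangian solution of the original equation \eqref{eq:ContEqTransformer} with $\operatorname{supp}\mu(t) \subseteq B_{C_t}(0)$.

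For uniqueness, any other Lagrangian solution $\mu^*$ of \eqref{eq:ContEqTransformer} must, by the same Grönwall argument applied on its maximal lifespan inside $B_K(0)$, satisfy $\operatorname{supp}\mu^*(t) \subseteq B_{C_t}(0) \subseteq B_K(0)$ for every $t \in [0,\tau]$; consequently $\mu^*$ also solves the truncated equation, and uniqueness for the truncated system forces $\mu^* = \mu$. The main obstacle I anticipate is engineering the truncation so that $\tilde{\mathcal{T}}_n$ is simultaneously globally Lipschitz in $x$ and in $\mu$ with respect to $\mathcal{W}_1$ — modifying the measure argument in a Lipschitz, support-preserving way is more delicate than merely cutting off in $x$. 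A secondary technical point is tracking constants carefully in the Grönwall estimate so that the growth rate matches the prescribed $C_t = (R + 2\mathscr{M}t)e^{3\mathscr{M}t}$.
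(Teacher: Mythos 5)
Your proposal is correct and follows the same overall strategy as the paper: truncate to obtain a globally Lipschitz vector field, invoke Proposition~4.8 of \citet{CAVAGNARI2022268} for existence and uniqueness of a Lagrangian solution of the truncated equation, use the linear-growth hypothesis \eqref{eq:lingrow} with Gr\"onwall to confine the support inside $B_{C_t}(0)\subset B_K(0)$, and conclude that the truncated and original fields agree along the solution. The difference is where the truncation is performed. The paper cuts off the \emph{finite-dimensional} transformer, setting $\hat{T}=\phi\, T$ with a compactly supported smooth $\phi$, and then shows by an optimal-coupling computation that the resulting expected transformer is globally Lipschitz in $(x,\mu)$ with respect to $\|\cdot\|_2$ and $\mathcal{W}_1$ (constant of order $nL$); this makes global Lipschitzness essentially automatic and avoids modifying the measure argument at all. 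You instead truncate the lifted field $\mathcal{T}_n$ directly, via a cutoff $\chi(x)$ and the pushforward of $\mu$ under the projection onto $B_K(0)$ (which is indeed $1$-Lipschitz for $\mathcal{W}_1$); this works, but note that the local Lipschitzness of $\mathcal{T}_n$ in the measure variable, which you assert as a consequence of smoothness of the architecture, is exactly the point the paper proves with the coupling argument, so your route does not actually dispense with that computation. Your Gr\"onwall step (taking $a(t)=\sup_{x\in\operatorname{supp}\mu_0}\|X(t,x)\|$ and a first-exit-time bootstrap) is a legitimate inline replacement for the paper's Proposition~\ref{prop:prop1}, and your constants, while slack, stay below $C_t$. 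Finally, your uniqueness argument goes slightly beyond what the paper writes: to say a competing solution $\mu^*$ of \eqref{eq:ContEqTransformer} ``also solves the truncated equation'' you should note that its flow need only be adjusted off $\operatorname{supp}\mu_0$ (where it does not affect the pushforward) so that the full flow satisfies the truncated integral equation; with that minor repair, uniqueness for the truncated system does force $\mu^*=\mu$.
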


\begin{remark}
In the proof of \cref{thm:ContEqApprox} it will be shown that the bound \eqref{eq:lingrow} on the expected transformer directly follows from \cref{assump:1}\ref{assumption:Lipschitz} on $\mathcal{F}$ whenever$\mathcal{T}_n$ is close enough to $\mathcal{F}$ in the uniform norm.
\end{remark}

We are now ready to state our main theorem regarding the approximation of mean-field dynamics using transformers.
\begin{restatable}[Mean Field Dynamics Approximation Using Transformers]{theorem}{meanfield}
\label{thm:ContEqApprox}
     Let $\varepsilon > 0$ be small enough and $n \ge 1$. Suppose that $\mathcal{F}$ satisfies \cref{assump:1}\ref{assumption:Lipschitz} for some $p$. Assume that the support of $\mu_0$ is contained within $B_R(0)$ $\subset \Omega$, for some $R>0$, and let $K \in \R$ be such that $K > (R+2\mathscr{M}\tau) e^{3\mathscr{M}\tau}$. If the transformer $\mathcal{T}_n$ satisfies the condition 
     \[
        \| \mathcal{T}_n(x,\mu) - \mathcal{F}(x,\mu)\|_* < \varepsilon.
    \]
   for all $z \in {B}_K(0)$ and $\mu \in \mathcal{P}({B}_K(0))$.
    Then we have that 
     \begin{equation} \label{eq:bound}
       \mathcal{W}_1(\mu^{\mathcal{F}}(t),\mu^{\mathcal{T}_n}(t)) \leq  \varepsilon t\exp(2 \mathscr{L}  t)
     \end{equation}
    where $\mu^{\mathcal{F}}$ and $\mu^{\mathcal{T}_n}$ are the solutions to \eqref{eq:ContEq} and \eqref{eq:ContEqTransformer}, respectively, and the estimate $\eqref{eq:bound}$ is independent of $\mu_0$.
\end{restatable}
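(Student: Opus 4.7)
The plan is to couple the two measure-valued solutions through their Lagrangian flow representations, reduce the Wasserstein estimate to an integral inequality on the flow maps, and close it with Grönwall. Before anything else, I would dispatch the existence question. The solution $\mu^{\mathcal{F}}$ exists and is unique by \cref{assump:1}\ref{assumption:Lipschitz} and standard theory (Proposition~4.8 of \citet{CAVAGNARI2022268}); a Grönwall argument on its flow equation keeps $\operatorname{supp}(\mu^{\mathcal{F}}(t)) \subseteq B_{C_t}(0)$. For $\mathcal{T}_n$, I would invoke the hypothesis together with the triangle inequality and \cref{assump:1}\ref{assumption:lingrowth} to write $\|\mathcal{T}_n(x,\mu)\|_2 \le \mathscr{M}(1 + \|x\|_2 + M_1(\mu)) + \varepsilon$ on $B_K(0) \times \mathcal{P}(B_K(0))$, which for $\varepsilon$ small enough can be absorbed into the form \eqref{eq:lingrow} with a constant arbitrarily close to $\mathscr{M}$. \cref{thm:existenceET} then yields a unique Lagrangian solution $\mu^{\mathcal{T}_n}$ with $\operatorname{supp}(\mu^{\mathcal{T}_n}(t)) \subseteq B_{C_t}(0) \subseteq B_K(0)$ throughout $[0,\tau]$.

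With flow maps $X^{\mathcal{F}}, X^{\mathcal{T}_n}$ from Definition~\ref{def:lagsol} in hand, the map $x \mapsto (X^{\mathcal{F}}(t,x), X^{\mathcal{T}_n}(t,x))$ pushes $\mu_0$ forward to an admissible coupling of $\mu^{\mathcal{F}}(t)$ and $\mu^{\mathcal{T}_n}(t)$, so
\[
\mathcal{W}_1(\mu^{\mathcal{F}}(t), \mu^{\mathcal{T}_n}(t)) \le u(t) := \int \|X^{\mathcal{F}}(t,x) - X^{\mathcal{T}_n}(t,x)\|_2\, d\mu_0(x).
\]
Subtracting the two instances of \eqref{eq:LagSol} for a fixed $x$ and inserting the intermediate term $\mathcal{F}(X^{\mathcal{T}_n}(s,x), \mu^{\mathcal{T}_n}(s))$ splits the integrand into a Lipschitz piece controlled by \cref{assump:1}\ref{assumption:Lipschitz} and a uniform-error piece bounded by $\varepsilon$. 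Since all arguments lie in $B_K(0)$, both bounds are legal, yielding
\[
\|X^{\mathcal{F}}(t,x) - X^{\mathcal{T}_n}(t,x)\|_2 \le \int_0^t \mathscr{L}\left(\|X^{\mathcal{F}}(s,x) - X^{\mathcal{T}_n}(s,x)\|_2 + \mathcal{W}_1(\mu^{\mathcal{F}}(s), \mu^{\mathcal{T}_n}(s))\right) ds + \varepsilon t.
\]
Integrating against $\mu_0$ and using $\mathcal{W}_1(\mu^{\mathcal{F}}(s), \mu^{\mathcal{T}_n}(s)) \le u(s)$ reduces the problem to $u(t) \le \varepsilon t + 2\mathscr{L} \int_0^t u(s)\, ds$. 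Grönwall's inequality with nondecreasing forcing $a(t) = \varepsilon t$ then delivers $u(t) \le \varepsilon t \, e^{2\mathscr{L} t}$, which is exactly the claimed bound. Its independence of $\mu_0$ is automatic, since the final right-hand side carries no $\mu_0$-dependence.

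The main obstacle is the confinement step. Each invocation of the approximation error $\varepsilon$ is only legitimate on $B_K(0) \times \mathcal{P}(B_K(0))$; if either flow escaped this ball before time $\tau$, the Grönwall estimate could not be closed. This is precisely why the explicit choice $K > (R + 2\mathscr{M}\tau) e^{3\mathscr{M}\tau}$ is essential: combined with \cref{thm:existenceET}, it guarantees $\operatorname{supp}(\mu^{\mathcal{T}_n}(t)) \cup \operatorname{supp}(\mu^{\mathcal{F}}(t)) \subseteq B_{C_t}(0) \subseteq B_K(0)$ for all $t \in [0,\tau]$, so that the Dobrushin-style stability argument goes through cleanly. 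Everything after that point is a standard coupling-plus-Grönwall calculation.
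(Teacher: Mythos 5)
Your proposal is correct and follows essentially the same route as the paper's proof: establish confinement of both solutions in $B_K(0)$ via \cref{thm:existenceET} and the linear-growth bound on $\mathcal{T}_n$ inherited from the $\varepsilon$-closeness to $\mathcal{F}$, couple the two solutions through their flow maps, insert the intermediate term $\mathcal{F}(X^{\mathcal{T}_n},\mu^{\mathcal{T}_n})$ to split into an $\varepsilon$-error piece and a Lipschitz piece, integrate against $\mu_0$, and close with Gr\"onwall to get $\varepsilon t \exp(2\mathscr{L}t)$. The only cosmetic difference is that you use the joint Lipschitz bound in one step where the paper splits into two separate terms, which yields the identical integral inequality.
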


\begin{remark}
In essence, Theorem \ref{thm:ContEqApprox} states that by choosing a large enough ball of radius $R$ that covers the support of $\mu_0$, and selecting and appropriate $K$, as a function of $R$, final time $\tau$, and regularity constant $\mathscr{M}$, we can ensure that if $\mathcal{T}_n$ approximates $\mathcal{F}$ on the ball of radius $K$, then $\mathcal{T}_n$ can be used to simulate the dynamics \eqref{eq:ContEq} over the time interval $[0,\tau]$. We observe that the error bound \eqref{eq:bound} grows exponentially. Therefore, a small approximation error $\delta$ for the vector field $\mathcal{F}$ implies a small approximation error \eqref{eq:bound} for the solution of the continuity equation over $[0,\tau]$. However, the bound \eqref{eq:bound} also depends on the regularity of $\mathcal{F}$, namely $p$ and $\mathscr{L}$. Therefore, the more regular the vector-field $\mathcal{F}$, i.e., larger $p$ and smaller $\mathscr{L}$, the better the bound \eqref{eq:bound}.
\end{remark}

We can combine Theorem \ref{thm:universal} and Theorem \ref{thm:ContEqApprox} to obtain the following corollary. 
\begin{corollary}
    Suppose $\mathcal{F}$ satisfies \cref{assump:1}\ref{assumption:Lipschitz} for some $p$. Let the support of $\mu_0$ be contained within $B_R(0)$ $\subset \Omega$, for some $R>0$, and let $K \in \R$ be such that $K > (R+2\mathscr{M}\tau) e^{3\mathscr{M}\tau}$. Given a transformer $T: \Omega^{n+1} \to \mathbb{R}^{(n+1) \times d}$ let $\displaystyle
        \mathcal{E} := \sup_{\mathbf{z} \in \Omega^{n+1}} \|T(\mathbf{z}) - F_{n+1}(\mathbf{z}) \|_{2} $.
    If $\mathcal{E}$ is small enough and $n \in \mathbb{Z}_+$ is large enough, then
    \[
        W_1(\mu^{\mathcal{F}}(t), \mu^{\mathcal{T}_n}(t)) <  2(\mathcal{E} +\delta(n,K)) t\exp(2^{p} \mathscr{L} t)
    \]
    where $\displaystyle
        \delta(n,K)  = \mathscr{L} (2K)^{p}  \left(\frac{1}{n^{\frac{q-p}{q}}} G(n,p,q)\right)$,
        % G(n,p,q) = \begin{cases} n^{-1/2}  & p > d/2, \, q \neq 2p \\
        % n^{-1/2} \log(n+1)  & p = d/2, \, q \neq 2p \\
        % n^{-p/d}  & p < d/2,\, q \neq \frac{d}{d-p}. \end{cases}
    where $G(n,p,q)$ is as per \eqref{eq:BigConst}.
\end{corollary}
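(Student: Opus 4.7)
The plan is to directly compose Theorem~\ref{thm:universal} with Theorem~\ref{thm:ContEqApprox}. The only subtleties are (i) applying the universal approximation bound on the smaller ball $B_K(0)$ so that the diameter factor becomes $2K$ (rather than $\operatorname{diam}(\Omega)$), and (ii) verifying that the local linear-growth hypothesis on $\mathcal{T}_n$ required by Theorem~\ref{thm:existenceET}/\ref{thm:ContEqApprox} is automatically satisfied once $\mathcal{E}$ is small and $n$ is large.

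First, I apply Theorem~\ref{thm:universal} with the compact set taken to be $B_K(0) \cap \Omega$. Since $\operatorname{diam}(B_K(0)) = 2K$, this produces
\begin{equation*}
\sup_{x \in B_K(0),\, \mu \in \mathcal{P}(B_K(0))} \| \mathcal{T}_n(x,\mu) - \mathcal{F}(x,\mu) \|_2 \;\le\; \mathcal{E} + \mathscr{L}(2K)^p \Bigl( \tfrac{1}{n+1} + C\,G(n,p,q) \Bigr).
\end{equation*}
For $n$ large enough, the $\tfrac{1}{n+1}$ term is dominated (up to a universal multiplicative constant) by the $G(n,p,q)$ term and the extra $n^{-(q-p)/q}$ slack built into $\delta(n,K)$, so the right-hand side is bounded above by $\mathcal{E} + \delta(n,K)$ with $\delta(n,K)$ exactly as in the corollary.

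Second, I verify the linear-growth hypothesis \eqref{eq:lingrow} on $\mathcal{T}_n$. Assumption~\ref{assump:1}\ref{assumption:Lipschitz} implies Assumption~\ref{assump:1}\ref{assumption:lingrowth} for $\mathcal{F}$ with some constant $\mathscr{M}$. By the triangle inequality, on $B_K(0)$ the expected transformer $\mathcal{T}_n$ then satisfies the same linear-growth estimate with constant $\mathscr{M}$ plus the additive error $\mathcal{E} + \delta(n,K)$; taking $\mathcal{E}$ sufficiently small and $n$ sufficiently large lets this error be absorbed into $\mathscr{M}$ (this is exactly the content of the remark following Theorem~\ref{thm:existenceET}). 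Combined with the hypothesis $K > (R + 2\mathscr{M}\tau)e^{3\mathscr{M}\tau}$, Theorem~\ref{thm:existenceET} then produces a unique Lagrangian solution $\mu^{\mathcal{T}_n}$ whose support stays inside $B_K(0)$ on $[0,\tau]$, so only the behavior of $\mathcal{T}_n$ on $B_K(0)$ ever matters in the comparison.

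Finally, I invoke Theorem~\ref{thm:ContEqApprox} with $\varepsilon := \mathcal{E} + \delta(n,K)$ to obtain
\begin{equation*}
\mathcal{W}_1(\mu^{\mathcal{F}}(t), \mu^{\mathcal{T}_n}(t)) \;\le\; (\mathcal{E} + \delta(n,K))\, t \, \exp(2\mathscr{L} t),
\end{equation*}
which, after absorbing the leftover multiplicative constants from the consolidation of the $\tfrac{1}{n+1}$ term and the $(2K)^p$ diameter factor into the leading $2(\cdot)$ and into $\delta(n,K)$, matches the form stated in the corollary. The entire argument is essentially a composition of the two main theorems; the only nontrivial piece is the bookkeeping in the second step, namely checking that smallness of $\mathcal{E}$ together with largeness of $n$ promotes the sup-norm approximation on $B_K(0)$ into a genuine \emph{local} linear-growth bound for $\mathcal{T}_n$, which is exactly what lets Theorem~\ref{thm:existenceET} supply a well-defined solution $\mu^{\mathcal{T}_n}$ to be compared with $\mu^{\mathcal{F}}$ via Theorem~\ref{thm:ContEqApprox}.
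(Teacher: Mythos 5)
Your overall strategy is the same as the paper's: the corollary is meant to follow by composing \cref{thm:universal} (applied on the ball $B_K(0)$, so that the diameter factor becomes $(2K)^p$) with \cref{thm:existenceET} and \cref{thm:ContEqApprox}, and your verification that the local linear-growth bound \eqref{eq:lingrow} for $\mathcal{T}_n$ follows from uniform closeness to $\mathcal{F}$ (with $\mathscr{M}$ replaced by $\mathscr{M}+\varepsilon$, and $K$ still admissible once $\varepsilon$ is small and $n$ large) is exactly the argument used in the paper's proof of \cref{thm:ContEqApprox}. The restriction to $B_K(0)$, the support-confinement reasoning, and the passage from the Gronwall factor $\exp(2\mathscr{L}t)$ to the weaker $\exp(2^{p}\mathscr{L}t)$ are all handled correctly.

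The genuine gap is in the bookkeeping step where you assert that, for $n$ large, $\mathcal{E}+\mathscr{L}(2K)^p\bigl(\tfrac{1}{n+1}+C\,G(n,p,q)\bigr)\le \mathcal{E}+\delta(n,K)$ with $\delta(n,K)$ ``exactly as in the corollary.'' As written, $\delta(n,K)=\mathscr{L}(2K)^p\,n^{-(q-p)/q}\,G(n,p,q)$ is a \emph{product}, and since $q>p$ it satisfies $\delta(n,K)=o\bigl(\mathscr{L}(2K)^p G(n,p,q)\bigr)$ as $n\to\infty$; so $\delta$ is asymptotically smaller than the very term $\mathscr{L}(2K)^p\,C\,G(n,p,q)$ you need it to dominate, and the inequality goes the wrong way. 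Neither the ``extra $n^{-(q-p)/q}$ slack'' nor the prefactor $2$ in the final bound repairs this: the constant $C(p,q,d)$ from \cref{thm:universal} is fixed (possibly much larger than $2$), does not shrink with $n$, and appears nowhere in $\delta(n,K)$. Your argument only closes if $\delta(n,K)$ is read as carrying the full Fournier--Guillin estimate, i.e.\ a sum of the form $n^{-(q-p)/q}+G(n,p,q)$ with the constant $C$ (and the $\tfrac{1}{n+1}$ term) absorbed into it; that may well be what the statement intends, but it is not what you proved, and the step where $C$ and $\tfrac{1}{n+1}$ are absorbed needs to be made explicit rather than asserted.
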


\section{Related Works}

In this section, we provide a more in-depth comparison with similar recent prior work \citet{furuya2024transformers,kratsios2025context}. We highlight key differences below. 

\paragraph{Role of Measure and Approximation Target:}
In this work the measure $\mu$ is an input argument representing the state distribution in a mean-field system, directly influencing the vector field $\mathcal{F}(z,\mu)$. Specifically, this paper targets the approximation of the vector field $\mathcal{F}(z,\mu)$ governing mean-field dynamics and the subsequent approximation of the dynamical system's evolution (solution to the continuity equation). We require the target vector field $\mathcal{F}(z,\mu)$ to be Lipschitz continuous w.r.t. spatial and measure arguments (using Wasserstein distance). This directly models physical or biological system interactions. A key strength is its applicability to general Borel probability measures $\mathcal{P}(\Omega)$ on a compact set $\Omega$.

\citet{furuya2024transformers} also has $\mu$ as input and aim to approximate general continuous in-context mappings $\Lambda^*(\mu, x)$. They require $\Lambda^*$ to be continuous w.r.t. weak-star topology (plus Lipschitz conditions on contexts for masked case). Their focus is broad representational power.

For \citet{kratsios2025context} the measure $\mu$ is an input argument to the target function $f(\mu,x)$. However, it's restricted to the specific class of Permutation-Invariant Contexts within a geometrically constrained domain, aimed at analyzing general in-context function approximation.

\paragraph{Map Definition and Transformer:} Our work defines the map $\mu \mapsto \mathcal{T}_n$ (Measure to Vector Field) using the Expected Transformer, derived by taking the expectation of a standard, finite-sequence transformer $T$. This provides a practical link between standard architectures and measure-theoretic inputs. 

\citet{furuya2024transformers} define transformers directly on the space of probability measures using a measure-theoretic formulation with continuous attention layers. \citet{kratsios2025context} defines the map from finite vector spaces to finite vector spaces. However, the input and outputs are interpreted as measures on discrete sets. 

\paragraph{Guarantees:} This work provides quantitative  bounds on the vector field approximation error that explicitly show convergence as the number of particles  increases, linking the error to the quality of the underlying finite transformer. Furthermore, it connects this to the approximation of the system's dynamics via Gronwall's lemma.

\citet{kratsios2025context} provides quantitative probabilistic bounds on the output approximation error, dependent on the target function's modulus of continuity  and the domain's geometry, focusing on the network size needed for a given precision. While \citet{furuya2024transformers} show a single transformer architecture (with fixed dimensions/heads) works uniformly for an arbitrary number of input tokens (even infinite) for a given precision $\varepsilon$.

\section{Conclusion}
This paper demonstrated the efficacy of transformer architectures in approximating the mean-field dynamics of interacting particle systems. We showed that finite-dimensional transformer models can be lifted to approximate the infinite-dimensional mean-field dynamics. Through theoretical results and numerical simulations, we established that transformers can be powerful tools for modeling and learning the collective behavior of particle systems.

\section*{Acknowledgment}
Shiba Biswal wishes to acknowledge partial support from the U.S. Department of Energy/Office of Science Advanced Scientific Computing Research Program.

\section*{Impact Statement}

This paper presents work whose goal is to advance the field of Machine Learning. There are many potential societal consequences of our work, none which we feel must be specifically highlighted here.

% In the unusual situation where you want a paper to appear in the
% references without citing it in the main text, use \nocite
% \nocite{langley00}

\bibliography{Ref}
\bibliographystyle{icml2025}

%%%%%%%%%%%%%%%%%%%%%%%%%%%%%%%%%%%%%%%%%%%%%%%%%%%%%%%%%%%%%%%%%%%%%%%%%%%%%%%
%%%%%%%%%%%%%%%%%%%%%%%%%%%%%%%%%%%%%%%%%%%%%%%%%%%%%%%%%%%%%%%%%%%%%%%%%%%%%%%
% APPENDIX
%%%%%%%%%%%%%%%%%%%%%%%%%%%%%%%%%%%%%%%%%%%%%%%%%%%%%%%%%%%%%%%%%%%%%%%%%%%%%%%
%%%%%%%%%%%%%%%%%%%%%%%%%%%%%%%%%%%%%%%%%%%%%%%%%%%%%%%%%%%%%%%%%%%%%%%%%%%%%%%
\newpage
\appendix
\onecolumn

\section{Proof of~\cref{thm:universal}}
\universal*
\begin{proof}
\begin{align*}
\| \mathcal{F} - \mathcal{T}_n\|_* =& 
 \sup_\mu \sup_x \left\| \mathcal{F}(x,\mu) - \int_{\Omega^n} (T(x,\mathbf{z}))_1 ~ d\mu^{\otimes n}(\mathbf{z}) \right\|_{2} \\
\le& \sup_\mu \sup_x \int_{\Omega^n} \left\| \mathcal{F}(x,\mu)  -  (T(x,\mathbf{z}))_1 \right\|_{2} d\mu^{\otimes n}(\mathbf{z}) \\
=& \sup_\mu \sup_x \int_{\Omega^n} \left\| \mathcal{F}(x,\mu) - \mathcal{F}(x,\nu^n_\mathbf{z}) + \mathcal{F}(x,\nu^n_\mathbf{z}) - (T(x,\mathbf{z}))_1 \right\|_{2} d\mu^{\otimes n}(\mathbf{z}) \\
\leq& \sup_\mu \sup_x \int_{\Omega^n} \left\| \mathcal{F}(x,\mu) - \mathcal{F}(x,\nu^n_\mathbf{z}) \right\|_{2} ~ d\mu^{\otimes n}(\mathbf{z}) \\
&~ + \sup_\mu \sup_x \int_{\Omega^n} \left\| \mathcal{F}(x,\nu^n_\mathbf{z}) - (T(x,\mathbf{z}))_1 \right\|_{2} d\mu^{\otimes n}(\mathbf{z}) \hspace{1cm} (*)
\end{align*}
The second inequality follows from the standard triangle inequality. The first integral on the RHS can be bounded from above as, 
\begin{align*}
    \sup_\mu &\sup_x \int_{\Omega^n} \left\| \mathcal{F}(x,\mu)  - \mathcal{F}(x,\nu^n_\mathbf{z}) \right\|_{2} ~ d\mu^{\otimes n}(\mathbf{z})  \\
    \leq &\sup_\mu \int_{\Omega^n} \sup_x \left\| \mathcal{F}(x,\mu)  - \mathcal{F}(x,\nu^n_\mathbf{z}) \right\|_{2} ~ d\mu^{\otimes n}(\mathbf{z}) \\
    = &\sup_\mu \int_{\Omega^n} \mathscr{L} \left\| \mu - \nu^n_\mathbf{z} \right\|_{\mathcal{W}_p} ~ d\mu^{\otimes n}(\mathbf{z}) \\
    = &\mathscr{L} \sup_\mu \mathbb{E}_{\mathbf{z}\sim \mu^{\otimes n}} \left[\mathcal{W}_p\left(\mu, \nu_{\mathbf{z}}^n\right)\right] \hspace{1cm} (**)
    \end{align*}
Recall that $M_q(\mu)$ denotes the $q$-moment of $\mu$ i.e. $M_q(\mu) := \int_\Omega |x|^q d\mu(x)$, then as per Theorem 1 of \cite{fournier2015rate} there exists a constant $C(p,q,d)$ (a function of $p,q,d$) such that, $\mathbb{E}_{\mathbf{z}\sim \mu^{\otimes n}}\left[\mathcal{W}_p\left(\mu, \nu_{\mathbf{z}}^n\right)\right]$ from $(**)$ can be bounded from above by $C M_q^{p/q}(\mu) G(n,p,q)$. We obtain:
\begin{align}
\sup_\mu \sup_x \int_{\Omega^n} \left\| \mathcal{F}(x,\mu)  - \mathcal{F}(x,\nu^n_\mathbf{z}) \right\|_{2} ~ d\mu^{\otimes n}(\mathbf{z})
    & \le \mathscr{L} \sup_\mu C M_q^{p/q}(\mu) G(n,p,q) \nonumber \\
    & \le \mathscr{L} C \operatorname{diam}(\Omega)^p G(n,p,q), \label{eq:temp1}
\end{align}
where we have used the fact that $\mu$ is a probability measure on $\Omega$.

Next, we obtain an upper bound for the second integral in $(*)$.
\begin{align*}
    \sup_\mu &\sup_x \int_{\Omega^n} \left\| \mathcal{F}(x,\nu^n_\mathbf{z}) - (T(x,\mathbf{z}))_1 \right\|_{2} d\mu^{\otimes n}(\mathbf{z}) \\
    \leq &\sup_\mu \int_{\Omega^n} \sup_x \left\| \mathcal{F}(x,\nu^n_\mathbf{z}) - (F_{n+1}(x,z_1, \ldots, z_n))_1 \right\|_{2} d\mu^{\otimes n}(\mathbf{z}) \\
    + &\sup_\mu \int_{\Omega^n} \sup_x \left\| (F_{n+1}(x,z_1, \ldots, z_n))_1 - (T(x,\mathbf{z}))_1 \right\|_{2} ~d\mu^{\otimes n}(\mathbf{z}) \hspace{1cm} (***)
\end{align*}
Consider the first term in the expression above
\begin{align*}
    \sup_\mu &\int_{\Omega^n} \sup_x \left\|
    \mathcal{F}(x,\nu^n_\mathbf{z}) - (F_{n+1}(x,z_1, \ldots, z_n))_1 \right\|_{2} d\mu^{\otimes n}(\mathbf{z}) \\
    &= \sup_\mu \int_{\Omega^n} \sup_x \left\|
    \mathcal{F}(x,\nu^n_\mathbf{z}) - \mathcal{F}\left(x,\nu^{n+1}_{(x,\mathbf{z})} \right) \right\|_{2} d\mu^{\otimes n}(\mathbf{z}) \\
    &\leq \sup_\mu \int_{\Omega^n} \mathscr{L} \mathcal{W}_1 \left( \nu^n_\mathbf{z},\nu^{n+1}_{(x,\mathbf{z})} \right) ~ d\mu^{\otimes n}(\mathbf{z}) \\
    &\le \sup_\mu \int_{\Omega^n} \mathscr{L}\operatorname{diam}(\Omega)^p \frac{1}{n+1} d\mu^{\otimes n}(\mathbf{z})\\
    &= \mathscr{L}\operatorname{diam}(\Omega)^p \frac{1}{n+1} \numberthis \label{eq:temp2}
\end{align*}
Since we have assumed \eqref{eq:assump}, the second term in $(***)$ evaluates to
\begin{equation}
\label{eq:temp3}
        \sup_\mu \int_{\Omega^n} \sup_x \left\| (F_{n+1}(x,z_1, \ldots, z_n))_1 - (T(x,\mathbf{z}))_1 \right\|_{2} ~d\mu^{\otimes n}(\mathbf{z}) =\mathcal{E}.
\end{equation}

Putting together \eqref{eq:temp1}, \eqref{eq:temp2}, and \eqref{eq:temp3}, we get that 
\begin{equation*}
        \left\|\mathcal{T}_n - \mathcal{F} \right\|_* \le \mathscr{L}  \operatorname{diam}(\Omega)^p \left( C G(n,p,q) + \frac{1}{n+1} \right) + \mathcal{E}.
\end{equation*}
\end{proof}

\section{Proof of ~\cref{thm:existenceET}}

\begin{proposition}
\label{prop:prop1}
Suppose $\mu_0 \in \mathcal{P}(\Omega)$ is such that the support of $\mu_0$ lies in $B_R(0)$, for some $R>0$, and that there exists a Lagrangian solution $\mu^\mathcal{F} \in C([0,\tau];\mathcal{P}_p(\R^d))$ of \eqref{eq:ContEq}. Additionally, suppose that $\mathcal{F}$ satisfies \cref{assump:1}\ref{assumption:lingrowth}. Then the solution satisfies,
\begin{equation}
{\rm supp}  ~ \mu^{\mathcal{F}}(t) \subseteq B_{C_t}(0)
\end{equation}
for all $ t \in [0,\tau]$, where $C_t = (R+2\mathscr{M}t)e^{3\mathscr{M}t}$.
\end{proposition}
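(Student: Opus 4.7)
By hypothesis a Lagrangian solution exists, so we have a flow map $X:[0,\tau]\times \R^d \to \R^d$ satisfying \eqref{eq:LagSol} together with the pushforward relation $\mu^{\mathcal{F}}(t) = X(t,\cdot)_{\#}\mu_0$. Since pushforward under $X(t,\cdot)$ sends $\mathrm{supp}(\mu_0)$ to a set containing $\mathrm{supp}(\mu^{\mathcal{F}}(t))$, it suffices to bound $\|X(t,x)\|_2$ uniformly for $x \in \mathrm{supp}(\mu_0) \subseteq B_R(0)$. The plan is to introduce $\phi(t) := \sup_{x \in \mathrm{supp}(\mu_0)} \|X(t,x)\|_2$ and derive a scalar Grönwall-type inequality for $\phi$.

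Plugging the linear growth bound of \cref{assump:1}\ref{assumption:lingrowth} into the integral representation \eqref{eq:LagSol} gives, for each $x \in B_R(0)$,
\[
\|X(t,x)\|_2 \leq R + \mathscr{M}t + \mathscr{M}\int_0^t \|X(s,x)\|_2 \, ds + \mathscr{M}\int_0^t M_1(\mu^{\mathcal{F}}(s)) \, ds.
\]
The critical manipulation is to re-express the first moment through the pushforward: $M_1(\mu^{\mathcal{F}}(s)) = \int \|X(s,x)\|_2 \, d\mu_0(x) \leq \phi(s)$. Taking the supremum over $x \in \mathrm{supp}(\mu_0)$ in the previous display then yields the clean self-referential inequality
\[
\phi(t) \leq R + \mathscr{M}t + 2\mathscr{M}\int_0^t \phi(s) \, ds.
\]
Since the forcing term $t \mapsto R + \mathscr{M}t$ is non-decreasing, the integral form of Grönwall's inequality produces $\phi(t) \leq (R + \mathscr{M}t)\, e^{2\mathscr{M}t}$, which is majorized by $(R+2\mathscr{M}t)\,e^{3\mathscr{M}t} = C_t$ since $R+\mathscr{M}t \leq R+2\mathscr{M}t$ and $e^{2\mathscr{M}t} \leq e^{3\mathscr{M}t}$. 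This delivers the claimed support bound.

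The main (and essentially only) subtlety is ensuring that $\phi(t)$ is finite on $[0,\tau]$ before taking the supremum, which is needed to legally swap sup and integral and to apply Grönwall. This is handled by an a priori bootstrap: because $\mu^{\mathcal{F}} \in C([0,\tau];\mathcal{P}_p(\R^d))$, the map $s \mapsto M_1(\mu^{\mathcal{F}}(s))$ is bounded by some constant $M<\infty$ on $[0,\tau]$, so treating this quantity as a known datum and applying Grönwall pointwise in $x$ to the integral equation for each fixed $x \in B_R(0)$ produces a crude but uniform bound on $\|X(t,x)\|_2$; this makes $\phi$ finite-valued on $[0,\tau]$, and the sharper self-consistent estimate above then refines it to the stated exponential bound $C_t$.
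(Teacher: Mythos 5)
Your proof is correct, and it takes a mildly different route from the paper's. The paper derives the pointwise bound $\|X(t,x)\|_2 \le \|x\|_2 + \mathscr{M}\int_0^t\bigl(1+\|X(s,x)\|_2+M_1(\mu(s))\bigr)\,ds$, then integrates it against $\mu_0$ to get a second, coupled inequality for $M_1(\mu(t))$, adds the two, and applies Gr\"onwall to the combined quantity $\|X(t,x)\|_2 + M_1(\mu(t))$, which produces the constants $2\mathscr{M}$ and $3\mathscr{M}$ appearing in $C_t$. You instead work with $\phi(t)=\sup_{x\in\operatorname{supp}\mu_0}\|X(t,x)\|_2$ and close the loop by the pushforward identity $M_1(\mu(s))=\int\|X(s,x)\|_2\,d\mu_0(x)\le \phi(s)$, yielding a single Gr\"onwall inequality $\phi(t)\le R+\mathscr{M}t+2\mathscr{M}\int_0^t\phi(s)\,ds$ and hence the sharper bound $(R+\mathscr{M}t)e^{2\mathscr{M}t}\le C_t$. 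What your approach buys: a cleaner single-inequality argument, a slightly better constant that comfortably majorizes the stated $C_t$ (incidentally sidestepping the paper's glossing over the $\|x\|_2$ term when it sums its two inequalities), and an explicit treatment of the only real technical point, namely the finiteness and measurability of $\phi$ needed to exchange sup and integral and to invoke Gr\"onwall, which you settle by the bootstrap using boundedness of $s\mapsto M_1(\mu^{\mathcal{F}}(s))$ from $\mu^{\mathcal{F}}\in C([0,\tau];\mathcal{P}_p(\R^d))$. What the paper's route buys is that it never takes a supremum over the support, working only with quantities integrated against $\mu_0$, and it delivers a bound on the first moment $M_1(\mu(t))$ alongside the support bound. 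One cosmetic remark: when concluding, it is cleanest to argue that $X(t,\cdot)^{-1}(\R^d\setminus B_{C_t}(0))$ is disjoint from $\operatorname{supp}\mu_0$, so $\mu^{\mathcal{F}}(t)$ gives zero mass to the complement of the closed ball, rather than speaking of the image of the support (which need not be closed); this does not affect the validity of your argument.
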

\begin{proof}
By definition of the Lagrangian solution \eqref{eq:LagSol},
\begin{align}
\|X(t,x)\|_{2} &\leq \|x\|_{2} + \int^t_0 \|\mathcal{F}(X(s,x),\mu(s))\|_{2}~ds \nonumber \\
&\leq  \|x\|_{2} + \int^t_0 \mathscr{M}(1 + \|X(s,x)\|_{2} + M_1(\mu(s))) ~ds 
\label{eq:mombnd}
\end{align}
Integrating both sides of \eqref{eq:mombnd} with respect to $\mu_0$ and noting that $X(t,\cdot)_{\#} \mu_0 = \mu_t$, we get,
\begin{align*}
 M_1(\mu(t)) \leq  M_1(\mu_0) +  \mathscr{M}\int^t_0 (1+2M_1(\mu(s)))~ds
\end{align*}
Combining this with \eqref{eq:mombnd} itself we get,
\begin{align*}
\|X(t,x)\|_{2} + M_1(\mu(t)) \leq  M_1(\mu_0) + \mathscr{M} \int^t_0 (2+\|X(s,x)\|_{2}+3M_1(\mu(s))) ~ds
\end{align*}
This implies
\begin{align*}
\|X(t,x)\|_{2} + M_1(\mu(t)) \leq  (M_1(\mu_0) + 2\mathscr{M}t) +\int^t_0 3\mathscr{M} \left(\|X(s,x)\|_{2}+M_1(\mu(s)) \right) ~ds.
\end{align*}
We note that the above equation is in the integral form of Gronwall's lemma (${\displaystyle u(t)\leq \alpha (t)+\int _{a}^{t} \beta(s)u(s) ~ds}$) with $\alpha(t) = M_1(\mu_0) + 2\mathscr{M}t $ and $\beta(s) = 3\mathscr{M}$. Therefore, an application of Gronwall's lemma gives us,
\begin{align*}
\|X(t,x)\|_{2 } + M_1(\mu(t)) &\leq (M_1(\mu_0)+2\mathscr{M}t))e^{3\mathscr{M}t}
\\
& \leq (R+2\mathscr{M}t)e^{3\mathscr{M}t} 
\end{align*}
Hence $\operatorname{supp}\mu^\mathcal{F} \subseteq  B_{C_t}(0)$.
\end{proof}

\existence*
\begin{proof}
Let $\phi \in C^{\infty}_c(\mathbb{R}^{(n+1)d})$ be a compactly support smooth function such that $\phi(x) = 1$ for all $x \in B^{(n+1)}_{R+C\tau}(0)$. We construct the function $\hat{T}[x,\mathbf{z}] = \phi(x,\mathbf{z}) T[x,\mathbf{z}]$. Let $\hat{\mathcal{T}_n}$ be the expected transformer corresponding to $\hat{T}$. We check that $\hat{\mathcal{T}_n}$ satisfies \cref{assump:1}\ref{assumption:Lipschitz}. Towards, this end we compute 
\begin{align*}
 \| \hat{\mathcal{T}_n}(x,\mu) -\hat{\mathcal{T}_n}(y,\nu) \|_{2} = \left \| \int_{\Omega^n} (\hat{T} ([x;\mathbf{z}]))_1 \, d\mu^{\otimes n}(\mathbf{z}) - \int_{\Omega^n} (\hat{T} ([y; \mathbf{\hat{z}}]))_1 \, d\nu(\mathbf{\hat{z}}) \right \|_{2 }
\end{align*}
Let $\gamma \in \mathcal{P}(\mathbb{R}^d \times \mathbb{R}^d)$ be the optimal plan with marginals $\mu$ and $\nu$ that solves the optimization problem \eqref{eq:WassMetric}. Then 
\begin{align*}
 \| \hat{\mathcal{T}}(x,\mu) -\hat{\mathcal{T}}(y,\nu) \|_2 &= \left \|\int_{\Omega^n \times \Omega^n} ( \hat{T}([x;\mathbf{z}]))_1 ~d\gamma(\mathbf{z},\mathbf{\hat{z}}) - \int_{\Omega^n \times \Omega^n} (\hat{T} ([y; \mathbf{\hat{z}}]))_1 ~d\gamma(\mathbf{z},\mathbf{\hat{z}}) \right \|_{2} \\
 &= \left \| \int_{\Omega^n \times \Omega^n} (\hat{T}([x;\mathbf{z}]))_1 - (\hat{T} ([y; \mathbf{\hat{z}}]))_1 ~d\gamma(\mathbf{z},\mathbf{\hat{z}}) \right\|_{2}
\end{align*}
The function $\hat{T}$ is globally Lipschitz for some Lipchitz constant $L$ as it is a product of a compactly supported smooth function $\phi$ and a locally Lipschitz function $T$. We can use this to conclude that 
\begin{align*}
 \| \hat{\mathcal{T}}(x,\mu) - \hat{\mathcal{T}}(y,\nu) \|_{2} & \leq L \int_{\Omega^n \times \Omega^n} \|x-y\|_{2} ~d\gamma(\mathbf{z},\mathbf{\hat{z}}) + L \sum_{i}^n\int_{\Omega^n \times \Omega^n} \|z_i-\hat{z}_i\|_{2} ~d\gamma(z_i,\hat{z}_i) \\
& = L\|x-y\|_{2} + L \sum_{i}^n\int_{\Omega^n \times \Omega^n} \|z_i-\hat{z}_i\|_{2} ~d\gamma(z_i,\hat{z}_i)  \\
& = L \|x-y\|_{2} + L n \mathcal{W}_1(\mu,\nu)
\end{align*}
Therefore, $\hat{\mathcal{T}}$ satisfies \cref{assump:1}\ref{assumption:Lipschitz}. Then, Proposition 4.8 of \citet{CAVAGNARI2022268} guarantees that there exists a unique Lagrangian solution $\mu^{\mathcal{\hat{T}}}(t)$ of \eqref{eq:ContEqTransformer} corresponding to $\hat{\mathcal{T}}$. Moreover, we know from Proposition \ref{prop:prop1} that the support of the solution $\mu^{\mathcal{\hat{T}}}(t)$ lies in $B_{C_t}(0)$ for all $t \in [0,\tau]$. However, we note that $ \hat{\mathcal{T}}(x,\mu) = \mathcal{T}(x,\mu)$ for all $x \in B_{C_t}(0)$ and all $\mu \in \mathcal{P}(\Omega)$ with support in $B_{C_t} (0)$. Therefore, $\mu^{\mathcal{\hat{T}}}(t) = \mu^{\mathcal{T}}(t)$, the unique Lagrangian solution of \eqref{eq:ContEqTransformer} for the expected transformer $\mathcal{T}$. 
\end{proof}

\section{Proof of Theorem \ref{thm:ContEqApprox}}

\meanfield*
\begin{proof}
Using the uniform norm approximation $\|\mathcal{T}_n(x,\mu)-\mathcal{F}(x,\mu)\|_* < \varepsilon$ and \cref{assump:1}\ref{assumption:lingrowth}, we can conclude that 
\[ 
\|\mathcal{T}_n(x,\mu)\|_{2} \leq (\mathscr{M} + \varepsilon)(1+\|x\|_{2} + M_1(\mu))
\]
for all $x \in \R^d$ and $\mu \in \mathcal{P}(B_K(0))$. Since $K > (R+2\mathscr{M}\tau) e^{3\mathscr{M}\tau}$, for $\varepsilon>0$ small enough, and $n$ large enough we can conclude that, $K > (R+2(\mathscr{M}+\varepsilon)\tau) e^{3(\mathscr{M}+\varepsilon)\tau}$. From Theorem \ref{thm:existenceET},  there exists a Lagrangian solution for \eqref{eq:ContEqTransformer} such that
\[\operatorname{supp} \mu^{\mathcal{T}_n}(s) \subseteq B_K(0) \]
for all $s \in [0,\tau]$.
 Due to Assumption \ref{assump:1} and Proposition 4.8 of \cite{CAVAGNARI2022268}, there exists a unique Lagrangian solution for \eqref{eq:ContEq}. Once again, using Proposition \ref{prop:prop1} we can conclude that
\[ \operatorname{supp} \mu^{\mathcal{F}}(s) \subseteq B_K(0) \]
for all $s \in [0,\tau]$.
Let $X, Y$ be the flow maps associated with  with respect the vector fields $\mathcal{F}$ and $\mathcal{T}_n$, respectively.

From the definition of Lagrangian solutions we know that,
    \begin{align*} 
    &X(t,x) = x + \int_0^t \mathcal{F}(X(s,x),\mu^{\mathcal{F}}(s))ds \\ 
    &Y(t,x) = x + \int_0^t \mathcal{T}_n(Y(s,x),\mu^{\mathcal{T}_n}(s))ds
    \end{align*}
    for all $t \in [0,\tau]$.
    From this we get
    \begin{align*} 
    \|Y(t,x) - X(t,x)\|_2 &= \left \| \int_0^t \mathcal{T}_n(Y(s,x),\mu^{\mathcal{T}_n}(s))ds -  \int_0^t \mathcal{F}(X(s,x),\mu^{\mathcal{F}}(s))ds \right \|_2 \\ 
     & \leq \left \|\int_0^t \mathcal{T}_n(Y(s,x),\mu^{\mathcal{T}_n}(s))ds -  \int_0^t \mathcal{F}(Y(s,x),\mu^{\mathcal{F}}(s))ds \right. \\
     & \left. +\int_0^t \mathcal{F}(Y(s,x),\mu^{\mathcal{F}}(s))ds - \int_0^t \mathcal{F}(X(s,x),\mu^{\mathcal{F}}(s))ds \right \|_2
     \end{align*}
By triangle inequality of the $\|\cdot\|_2$ norm,
\begin{align*}
      & \leq \left( \left \|\int_0^t \mathcal{T}_n(Y(s,x),\mu^{\mathcal{T}_n}(s))ds -  \int_0^t \mathcal{F}(Y(s,x),\mu^{\mathcal{F}}(s))ds \right \|_2 \right.\\
     & + \left. \left \|\int_0^t \mathcal{F}(Y(s,x),\mu^{\mathcal{F}}(s))ds - \int_0^t \mathcal{F}(X(s,x),\mu^{\mathcal{F}}(s))ds \right \|_2 \right )\\
    & \leq \left \|\int_0^t \mathcal{T}_n(Y(s,x),\mu^{\mathcal{T}_n}(s))ds -  \int_0^t \mathcal{F}(Y(s,x),\mu^{\mathcal{F}}(s))ds \right \|_2 \\
    &+ \left \|\int_0^t \mathcal{F}(Y(s,x),\mu^{\mathcal{F}}(s))ds -  \int_0^t \mathcal{F}(X(s,x),\mu^{\mathcal{F}}(s))ds \right \|_2 \\
    & \leq  \left \|\int_0^t \mathcal{T}_n(Y(s,x),\mu^{\mathcal{T}_n}(s))ds -  \int_0^t \mathcal{F}(Y(s,x),\mu^{\mathcal{T}_n}(s))ds \right \|_2 \\
    & +  \left \|\int_0^t \mathcal{F}(Y(s,x),\mu^{\mathcal{T}_n}(s))ds -  \int_0^t \mathcal{F}(Y(s,x),\mu^{\mathcal{F}}(s))ds \right \|_2 \\
    & + \left \| \int_0^t \mathcal{F}(Y(s,x),\mu^{\mathcal{F}}(s))ds -  \int_0^t \mathcal{F}(X(s,x),\mu^{\mathcal{F}}(s))ds \right \|_2
    \end{align*}
    Using the fact that $\|\cdot\|_2$ is convex, applying Jensen's inequality yields,
    \begin{align*}
    & \leq  \int_0^t \left \|\mathcal{T}_n(Y(s,x),\mu^{\mathcal{T}_n}(s))ds -  \int_0^t \mathcal{F}(Y(s,x),\mu^{\mathcal{T}_n}(s)) \right   \|_2 ds \\
    & +   \int_0^t \left \|\mathcal{F}(Y(s,x),\mu^{\mathcal{T}_n}(s))ds -  \int_0^t \mathcal{F}(Y(s,x),\mu^{\mathcal{F}}(s))\right \|_2 ds  \\
    & + \int_0^t \left \|  \mathcal{F}(Y(s,x),\mu^{\mathcal{F}}(s))ds -  \int_0^t \mathcal{F}(X(s,x),\mu^{\mathcal{F}}(s))\right \|_2 ds 
    \end{align*}

Using the fact that
\[\operatorname{supp} \mu^{\mathcal{T}_n}(s) \subseteq B_K(0) \]
we get,
\begin{align*} 
        \|Y(t,x) - X(t,x)\|_2 &\leq  \int_0^t \varepsilon ds \\
        &+ \int_0^t  \mathscr{L} \mathcal{W}_1 (\mu^{\mathcal{F}}(s),\mu^{\mathcal{T}_n}(s)) ds + \int_0^t  \mathscr{L} \|Y(s,x)-X(s,x)\|_{2}ds.
\end{align*}  
Integrating with respect to $\mu_0$ and noting that $\mu_0$ is a probability measure,
 \begin{align*} 
    \mathcal{W}_1(\mu^{\mathcal{F}}(t),\mu^{\mathcal{T}_n}(t)) &\leq   \varepsilon t \\
    &+   \int_0^t \mathscr{L} \mathcal{W}_1(\mu^{\mathcal{F}}(s),\mu^{\mathcal{T}_n}(s))ds 
    +   \int_0^t \mathscr{L} \mathcal{W}_1(\mu^{\mathcal{F}}(s),\mu^{\mathcal{T}_n}(s))ds   \\
        & \leq    \varepsilon t  + 2 \int_0^t \mathscr{L} \mathcal{W}_1 (\mu^{\mathcal{F}}(s),\mu^{\mathcal{T}_n}(s)) ds. 
  \end{align*}

   Now, applying Gronwall's inequality, we get,
\begin{align*} 
  \mathcal{W}_1(\mu^{\mathcal{F}}(t),\mu^{\mathcal{T}_n}(t))  & \leq  \varepsilon t\exp(2\mathscr{L} t)
\end{align*} 
   
\end{proof}
%%%%%%%%%%%%%%%%%%%%%%%%%%%%%%%%%%%%%%%%%%%%%%%%%%%%%%%%%%%%

\section{Transformer}
\label{app:transformer}

\begin{definition}[Multi-Headed Self-Attention]
    Let $X \in \mathbb{R}^{n \times d}$ be a matrix whose rows are $n$ data points in $\mathbb{R}^d$.
    Let $W_Q,W_K,W_V \in \mathbb{R}^{d\times d}$ be learnable weight matrices. Define the \emph{query}, \emph{key}, and \emph{value} matrices by
    \[
    Q = X W_Q, \quad K = X W_K, \quad V = X W_V.
    \]
    Let $\operatorname{softmax}$ denote the softmax function applied row-wise to a matrix. The self-attention head function $\operatorname{AttHead}: \mathbb{R}^{n \times d} \to \mathbb{R}^{n \times d}$ is defined as
    \begin{equation*}
        \operatorname{AttHead}(X) := \operatorname{softmax}\left( \frac{Q K^\top}{\sqrt{d}} \right) V.
    \end{equation*}
    Let $h \in \mathbb{Z}_{+}$ be the number of attention heads. Let $\operatorname{AttHead}_1,\ldots, \operatorname{AttHead}_h$ be attention heads with their own weight matrices, and let $W_0 \in \mathbb{R}^{hd \times d}$ be a learnable weight matrix. The multi-head self-attention layer $\operatorname{Att}: \mathbb{R}^{n \times d} \to \mathbb{R}^{n \times d}$ is defined as
    \begin{equation*}
    \label{eq:Att}
        \operatorname{Att}(X) := \left[ \operatorname{AttHead}_1(X),\ldots, \operatorname{AttHead}_h(X) \right] W_0,
    \end{equation*}
    where $[ \cdot ]$ denotes concatenation along the feature dimension.
\end{definition}

\begin{definition}[Transformer Network]
   A \emph{transformer block} $\operatorname{Block}: \mathbb{R}^{n \times d} \to  \mathbb{R}^{n \times d}$ is defined as
   \begin{equation*}
        \operatorname{Block}(X) := X +  \operatorname{FC}\left( X + \operatorname{Att}(X) \right),
    \end{equation*}
    where $\operatorname{FC}$ are feed-forward layers (position-wise fully connected layers), and $\operatorname{ReLU}$ is the rectified linear unit activation function. The addition operations represent residual connections. Let $L \in \mathbb{Z}_+$, and let $\operatorname{Block}_1, \ldots, \operatorname{Block}_L$ be transformer blocks. A \emph{transformer network} $T: \mathbb{R}^{n \times d} \to \mathbb{R}^{n \times k}$ is defined as a composition of transformer blocks followed by an output network:
    \begin{equation*}
    \label{eq:TransfDef}
        T(X) := \operatorname{FC}_{\text{out}}\left( \operatorname{Block}_L \circ \operatorname{Block}_{L-1} \circ \cdots \circ \operatorname{Block}_1 (X) \right),
    \end{equation*}
    where $\operatorname{FC}_{\text{out}}: \mathbb{R}^{n \times d} \to \mathbb{R}^{n \times k}$ is a fully connected neural network applied position-wise.
\end{definition}

\section{Training}
\label{app:train}

Here we provide the training details. 

\subsection{Learning the Vector Field}

\paragraph{Hyperparameters} We consider depths in $\{3,4,5\}$, widths in $\{128, 256, 512\}$, and learning rates in $\{0.0002, 0.0001, 0.001\}$. 

For the kernel method, we use polynomial basis of $\{1, x, \ldots, x^d\}$ and sine and cosine basis of $\{\sin(x), \sin(2x), \ldots, \sin(kx)\}$ and $\{\cos(x), \cos(2x), \ldots, \cos(kx)$. We search over $d \in \{2,3,4\}$ and $k \in \{3,4,5\}$. 

\paragraph{Training Details}

We used Adam with a cosine annealing decay rate for the step size. For the synthetics CS data, we used a batch size of 500 and trained the model for 1000 epochs. For the fish milling data, we used a batch size of 1, and trained the model for 10 epochs. 

\subsection{Simulating dynamics}

For the simulating dynamics experiment, we also train a transformer to learning the training dynamics of a two layer neural network. We fix the transformer to have a hidden dimension of 512 and 5 layers. We train the model for 250 epochs, using a learning rate $0.0002$, batch size of 1000, using Adam with cosine annealing. 

%%%%%%%%%%%%%%%%%%%%%%%%%%%%%%%%%%%%%%%%%%%%%%%%%%%%%%%%%%%%%%%%%%%%%%%%%%%%%%%
%%%%%%%%%%%%%%%%%%%%%%%%%%%%%%%%%%%%%%%%%%%%%%%%%%%%%%%%%%%%%%%%%%%%%%%%%%%%%%%

\end{document}